\newcommand{\one}{\mathbbm{1}}
\newcommand{\pr}[1]{\left(#1\right)}
\newcommand{\vrr}[1]{\left|#1\right|}
\newcommand{\cor}[1]{\left[#1\right]}
\newcommand{\cur}[1]{\left\{#1\right\}}
\newcommand{\half}{\frac{1}{2}}
\newcommand{\pd}{\partial}
\newcommand{\Lag}{\mathcal{L}}
\newcommand{\R}{\mathbb{R}}
\newcommand{\C}{\mathbb{C}}
\newcommand{\itero}{\mathrm{s}} %The macro name for the iteration parameter for defining AV
\newcommand{\subeq}[1]{\begin{subequations}#1\end{subequations}}
\newcommand{\splitt}[1]{\begin{split}#1\end{split}}
\newcommand{\mtrx}[1]{{\renewcommand{\arraystretch}{1.0}\begin{pmatrix}#1\end{pmatrix}}}
\newtheorem{thm}{Theorem}
\newtheorem{dfn}{Definition}
\newtheorem{asm}{Assumption}
\begin{document}
\begin{titlepage}
\title{Which part of the stress-energy tensor gravitates?}

\author{Yigit Yargic}
\email{yyargic@perimeterinstitute.ca}
\affiliation{Perimeter Institute, 31 Caroline St N, Waterloo, Ontario, Canada}
\affiliation{Department of Physics and Astronomy, University of Waterloo, Waterloo, Ontario N2L 3G1, Canada}
\author{Laura Sberna}
\email{lsberna@perimeterinstitute.ca}
\affiliation{Perimeter Institute, 31 Caroline St N, Waterloo, Ontario, Canada}
\affiliation{Department of Physics and Astronomy, University of Waterloo, Waterloo, Ontario N2L 3G1, Canada}
\author{Achim Kempf}
\email{akempf@perimeterinstitute.ca}
\affiliation{Perimeter Institute, 31 Caroline St N, Waterloo, Ontario, Canada}
\affiliation{Department of Physics and Astronomy, University of Waterloo, Waterloo, Ontario N2L 3G1, Canada}
\affiliation{Department of Applied Mathematics, University of Waterloo, Waterloo, Ontario, N2L 3G1, Canada}
\affiliation{Institute for Quantum Computing, University of Waterloo, Waterloo, Ontario, N2L 3G1, Canada}

\begin{abstract}

We consider the possibility that, in the semiclassical Einstein equation for cosmological spacetimes, gravity is sourced by the amount of  stress-energy that is above that of the instantaneous ground state. For this possibility to be consistent, the Bianchi identities must continue to hold. This is nontrivial because it means that the ground state expectation value of the stress-energy tensor must be covariantly conserved in spite of the fact that the ground state is generally a different state at different times. We prove that this consistency condition does hold. As a consequence, we find that the vacuum stress-energy which is above the instantaneous ground state does not renormalize the cosmological constant, as long as the instantaneous ground states and the instantaneous adiabatic vacua exist.

\end{abstract}

\maketitle

\end{titlepage}

%\tableofcontents

\section{Introduction}

In the semiclassical Einstein equation, the cosmological constant receives contri\-bu\-tions from the stress-energy of the vacuum of all matter fields~\cite{Padilla:2015aaa,Martin:2012bt,Burgess:2013ara}. However, the measured value of the cosmological constant is extremely small compared to the scale of each known contribution and one must also expect large contributions from new particles that might emerge at energies higher than probed so far. In addition, and most importantly, the required fine tuning is unstable in the sense that it needs to be repeatedly re-fine tuned as more loops are taken into account in the calculation of the vacuum polarization. This problem of \emph{radiative instability} is an essential part of the cosmological constant problem, see, e.g.,~\cite{Burgess:2013ara,Padilla:2015aaa}.

A key question in this context is which part of the in principle infinite stress-energy of a quantum field actually gravitates and therefore contributes to the right-hand side of the semiclassical Einstein equation.

The answer is straightforward for Minkowski space, where the Poincar\'e sym\-metry singles out the Minkowski vacuum state as a reference state for gravity. The assumption then is that only the stress-energy above the stress-energy of the Minkowski vacuum state is gravitating. In generic curved spacetimes, however, it is not obvious which state could play the role of such a reference state, i.e., a state whose stress-energy is subtracted when determining the amount of stress-energy that actually gravitates. 

At this point it is useful to consider that the search for such a gravitational reference state is logically independent from the search for another important reference state, namely the vacuum state in the sense of a no-particle state. 

In fact, a vacuum state in the sense of an overall no-particle state does not exist on generic spacetimes, due to gravity's ability to parametrically excite modes of matter fields and given the observer dependence of the very notion of particle, as demonstrated, e.g., by the Unruh effect. In the case of Friedmann-Lema\^itre-Robertson-Walker (FLRW) spacetimes, the so-called adiabatic vacuum is in a sense the best approximation to a vacuum state in the sense of a no-particle state for comoving observers, see, e.g.,~\cite{Birrell:1982ix}. 

In the present letter, we are concerned with FLRW spacetimes and we will discuss also the adiabatic vacuum and its role as the reference state with respect to which the particle content in a field is determined. However, our focus here will be on the search for the gravitational reference state with respect to which it is determined how much of the stress-energy of a quantum field contributes to the right-hand side of the semiclassical Einstein equation.

Concretely, we consider the possibility that, at least for FLRW spacetimes, the gravitational reference state at any given time is the ground state $\ket{0_{\mathrm{GS}(t)}}$ of the quantum field's energy density operator $\hat{\rho} \sim \hat{T}_{00}$ at that time. This means that it is the stress-energy above the instantaneous ground state expectation value that acts as the source of gravity on the right-hand side of the semiclassical Einstein equation. Note that our proposal does not rely on any particular criterion for identifying the vacuum state in the sense of a no-particle state.

That this ansatz is consistent with diffeomorphism invariance is non-trivial because the stress-energy of the ground state possesses two time dependencies: Since both $\hat{T}_{\mu\nu}(t,x)$ and $\ket{0_{\mathrm{GS}(t)}}$ are time dependent\footnote{Note that even if we consider the Heisenberg picture, in which states do not evolve with time, the instantaneous ground states have a parametric time dependence.}, we will need to show below that the Bianchi identity $\nabla^\mu \hat{T}_{\mu\nu} = 0$ still yields $\nabla^\mu \bra{0_{\mathrm{GS}(t)}} \hat{T}_{\mu\nu}(t) \ket{0_{\mathrm{GS}(t)}} = 0$.

Also, as is well known, at generic times $t$, the lowest energy state is generally not the adiabatic vacuum state. On one hand, this means that the adiabatic vacuum state is generally an energetically excited state. On the other hand, this means that the energetic ground state is a state with particles from the perspective of the adiabatic vacuum. 

We are here proposing, therefore, a clear distinction between the two basic reference states: the vacuum state, relative to which the particle content is defined, and the gravitational reference state with respect to which the gravitating part of the stress-energy is determined.

While the basic idea that only the excess stress-energy above the energetic ground state gravitates is simple, it has potentially far reaching consequences.
%it helps solve the radiative instability problem of the cosmological constant\new{, at least whenever the ground states and the vacuum states can be defined}. 
As we will show, the contribution of the subtracted vacuum energy density to the renormalization of the cosmological constant vanishes exactly.

We prove our results non-perturbatively for an arbitrary quantum field theory (QFT) on a cosmological background, while only assuming the existence of ground states and adiabatic vacua. The definition of these states is straightforward for free theories, as we show in the examples of Section \ref{sec:Examples}, but it can become a challenging task for interacting theories. Our results will hold for any QFT for which the ground states and adiabatic vacua can be proven to exist.

Despite the different starting point and physical interpretation, the renormalization resulting from our procedure is consistent with other renormalization methods, such as the Hadamard renormalization~\cite{fulling_1989}. Our proposal does not predict the numerical value of the cosmological constant. But once radiative instability is prevented, the cosmological constant problem is much lessened: Since the subtracted vacuum energy density does not contribute to the cosmological constant, its bare value is not renormalized and can be arbitrary, only constrained by observations. What our proposal does is to protect this value from the renormalization due to the vacuum fluctuations of matter fields, making its value natural in the technical sense. The existence of the cosmological constant per se is of course not a conundrum, as it already arose with the discovery of General Relativity~\cite{Bianchi:2010uw}.

Our paper is organized as follows. In Section~\ref{sec:proposal} we outline our proposal, discussing its diffeomorphism invariance and the role of the physical (adiabatic) vacuum. We conclude the Section discussing renormalization and the interpretation of our results, including the radiative stability of the cosmological constant. This is followed in Section~\ref{sec:Examples} by explicit calculations in three toy models: a free scalar, fermion and massive vector boson. %These examples are aimed at displaying the theoretical results of Section~\ref{sec:proposal} and at clarifying the theoretical concepts for the a reader less experienced with quantum field theory on a curved background.

In the following, we use the Heisenberg picture of quantum mechanics, in which operators are time-dependent and individual states are constant. We use the mostly-minus signature for the metric and set $\hbar = c = 1$.

\section{The effective stress-energy tensor}
\label{sec:proposal}

We begin by considering the semiclassical Einstein equation
\begin{align} \label{eq:semicl}
\frac{1}{8 \pi G} \, G_{\mu\nu}
- \rho_{\Lambda} \, g_{\mu\nu}
=	{\langle \hat{T}_{\mu\nu} \rangle}_\Psi
\;,
\end{align}
where $\rho_{\Lambda} = \frac{1}{8\pi G} \, \Lambda$ is the energy density corresponding to the cosmological constant. The metric is treated as a classical background field, while matter fields are quantized. We regard~\eqref{eq:semicl} as an effective field theory that arises as the low energy limit of a more fundamental theory of gravity~\cite{Donoghue:1995cz}. We are, therefore, neglecting higher order curvature terms that, in the ultraviolet, might arise on either sides of the equation. 

The expectation value of the stress-energy tensor ${\langle \hat{T}_{\mu\nu} \rangle}_\Psi = \bra{\Psi} \hat{T}_{\mu\nu} \ket{\Psi}$ in the state of the field, $\vert\Psi\rangle$, is a divergent quantity which we assume covariantly regularized. In Minkowski spacetime, we can use the unique ground state of the Hamiltonian, $\ket{0_{\mathrm{M}}}$ as a gravitational reference state. One postulates that it is only the stress-energy above the stress-energy of this state that gravitates: 
\begin{align}
\label{MinkEff}
{\langle \hat{T}_{\mu\nu} \rangle
}_\Psi^{\mathrm{ren.}}
=	{\langle \hat{T}_{\mu\nu} \rangle}_\Psi
	- {\langle \hat{T}_{\mu\nu} \rangle
	}_{\mathrm{M}}
\;.
\end{align}
In Minkowski spacetime, the ground state $\ket{0_{\mathrm{M}}}$ of the Hamiltonian respects the Poincar\'e symmetry of the background, and it also happens to play the role of the vacuum state, in the sense that it is the no-particle state.

As discussed in the introduction, in a fully generic curved spacetime, the notions of vacuum state and gravitational reference state are both highly nontrivial. In this paper, we therefore specialize to homogeneous and isotropic (FLRW) $n$-dimensional spacetimes.%\neww{For the sake of explicitness, we write this down as one of the assumptions for our proposal below.}
\begin{asm} \label{asm:FLRW}
    The background metric is that of an $n$-dimensional FLRW space\-time.
\end{asm}
First, we note that to preserve the symmetries of FLRW spacetimes, the expectation value ${\langle \hat{T}_{\mu\nu} \rangle}$ has only two distinct non-zero components: the energy density $\rho = n^\mu n^\nu {\langle \hat{T}_{\mu\nu} \rangle}$, where $n^\mu$ is the unit normal to the homogeneous hyper\-surfaces, and the pressure $p$ from the relation ${\langle \hat{T} \rangle} = \rho - \pr{n-1} p$. For most of this paper, we will use the preferred foliation that exists on an FLRW spacetime such that the homogeneous hypersurfaces are parametrized by the time coordinate $t \equiv x^0$ and $n^\mu = \sqrt{g^{00}} \, (1,0,...,0)$.

We now define an \emph{effective}\footnote{The reason for calling this quantity \emph{effective} instead of \emph{renormalized} will be explained in Section \ref{sec:Discussion}.} expectation value of the stress-energy tensor by subtracting the expectation value of the stress-energy of a \emph{gravitational reference state}, similar to the case of a Minkowski spacetime in \eqref{MinkEff}, in a way that also ensures that the resulting \emph{effective} energy density that actually gravitates is positive.

In a sense, the obvious choice for this subtraction is to subtract the stress-energy of the state with lowest energy density. And in homogeneous spacetimes, minimizing the energy density is of course equivalent to minimizing the total energy, at any given time. We call the states under consideration \emph{ground states}\footnote{In the literature, the term ground state is reserved for the eigenstate of the Hamiltonian operator with lowest eigenvalue, which might be different from the state with lowest energy density that we consider here. This distinction is particularly important if the quantum field is coupled to the curvature. The fact that we minimize the energy density will turn out to be important to preserve the Bianchi identity.} and we define them formally as follows.
\begin{dfn} \label{dfn:GS}
Given a quantum field theory (QFT) on an FLRW spacetime, the \textbf{(instantaneous) ground state at time $t$} is defined as the vacuum state $\ket{0_{\mathrm{GS}(t)}}$ which respects the symmetries of the background metric and satisfies
\begin{align}
\label{DefPropGS}
    \delta_\psi \pr{ n^\mu n^\nu
    \bra{\psi} \hat{T}_{\mu\nu}(t) \ket{\psi} }
    {\Big\vert}_{t ;\, \psi = 0_{\mathrm{GS}(t)}}
    = 0
    \;,
\end{align}
where the domain of variation $\delta_\psi$ is the set of all vacuum states that respect the symmetries of the FLRW background.
\end{dfn}
We assume the existence of ground states in the following and we discuss this assumption at the end of Section \ref{sec:Discussion}.
\begin{asm} \label{asm:GS-exists}
    An instantaneous ground state exists uniquely at each moment of time.
\end{asm}
The immediate concern in subtracting the stress-energy of the ground state is, however, that generically no single state minimizes the energy density $\rho(t)$ at all times simultaneously. The ground state at one time $t_0$ is an energetically excited state at another time $t_1$. One can only speak of an \emph{instantaneous} ground state $\ket{0_{\mathrm{GS}(t)}}$ that minimizes the energy at a given time $t$, while the ground states at different times are generically different states~--~see the explicit examples of ground state vacua provided in Section \ref{sec:Examples}.

\textbf{What we are proposing,} therefore, is to define for each time, $t$, its own gravitational reference state, namely the energetic ground state at that time $t$. Using this family $\cur{\ket{0_{\mathrm{GS}(t)}}, t \in \R}$ of instantaneous ground states, we can define the \emph{effective} expectation value of the stress-energy tensor by subtracting the corresponding ground state value at every time, such that
\begin{align}
\label{GSEff}
{\langle \hat{T}_{\mu\nu} \rangle
	}_\Psi^{\mathrm{eff.}} (t)
\equiv
    {\langle \hat{T}_{\mu\nu} (t) \rangle
	}_\Psi
	- {\langle \hat{T}_{\mu\nu} (t) \rangle
	}_{\mathrm{GS}(t)}
\;.
\end{align}
{The counter-intuitive fact that this quantity is covariantly conserved is one of the main results of this paper and it will be proven in the next section as Theorem \ref{thm:magic}}. We then propose that the source to the semiclassical Einstein equation for an FLRW background is the effective part \eqref{GSEff} of the stress-energy tensor expectation value,
\begin{align}
\label{EinsteinEff}
\boxed{
\frac{1}{8 \pi G} \, G_{\mu\nu}
- \rho_{\Lambda} \, g_{\mu\nu}
=	{\langle \hat{T}_{\mu\nu} \rangle
	}_\Psi^{\mathrm{eff.}}
}\;.
\end{align}

\subsection{Covariant conservation law}

When we choose a time-dependent vacuum family for the subtraction as in \eqref{GSEff}, the major concern is preserving the consistency of the semiclassical Einstein equation under a covariant derivative. The left-hand side in \eqref{EinsteinEff} consists of covariantly conserved tensors. For any individual state, the expectation value $\langle \hat{T}_{\mu\nu} \rangle$ of the stress-energy tensor is also covariantly conserved by diffeomorphism invariance. However, this argument does not apply to the vacuum family expectation value ${\langle \hat{T}_{\mu\nu} \rangle}_{\mathrm{GS}(t)}$ because of the parametric time dependence of the state. The reader might expect the conservation law to be broken for this quantity.

%, since the parametric time dependence of the state might be expected to break the conservation law.

This turns out not to be the case. We find that this is a non-trivial property of the ground state family on an FLRW background and summarize this result in the following theorem.
\begin{thm} \label{thm:magic}
The ground state family expectation value of the stress-energy tensor is covariantly conserved on an FLRW spacetime,
\begin{align}
\nabla^\mu {\langle \hat{T}_{\mu\nu} (t) \rangle
	}_{\mathrm{GS}(t)}
	= 0
\;.
\end{align}
\end{thm}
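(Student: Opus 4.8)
The plan is to exploit the fact that on an FLRW background, the covariant conservation law $\nabla^\mu \langle \hat T_{\mu\nu}\rangle = 0$ reduces by symmetry to a single scalar equation relating the energy density and pressure. The key observation: the parametric time dependence of the ground state contributes an extra term to the conservation law, and this extra term is precisely a variation of the energy density with respect to the state, evaluated along the direction in which the ground state changes with time — which vanishes by the defining property of the ground state, Definition~\ref{dfn:GS}.

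Concretely, first I would use Assumption~\ref{asm:FLRW} to write out $\nabla^\mu \langle \hat T_{\mu\nu}\rangle_{\mathrm{GS}(t)} = 0$ in the preferred foliation. Only the $\nu = 0$ component is nontrivial, and for any covariantly conserved-by-diffeomorphism tensor it takes the familiar form $\dot\rho + (n-1) H (\rho + p) = 0$, where $H$ is the Hubble rate and the dot is the time derivative. For the ground state family, the total time derivative $\dot\rho_{\mathrm{GS}}$ must be split using the chain rule into (i) the explicit time dependence of $\hat T_{\mu\nu}(t)$ (and of the metric components inside $n^\mu n^\nu$) and (ii) the implicit dependence through the state, $\ket{0_{\mathrm{GS}(t)}}$. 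Part (i) alone reproduces exactly the conservation law one would get for a fixed state, because the expectation value of the operator equation $\nabla^\mu \hat T_{\mu\nu} = 0$ in any fixed state yields it. So the theorem reduces to showing that part (ii), the contribution from $\frac{d}{dt}\ket{0_{\mathrm{GS}(t)}}$, vanishes.

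The heart of the argument is then: the state-variation piece of $\dot\rho_{\mathrm{GS}}$ is $\delta_\psi\big(n^\mu n^\nu \bra{\psi}\hat T_{\mu\nu}(t)\ket{\psi}\big)$ evaluated at $\psi = 0_{\mathrm{GS}(t)}$, contracted with the "velocity" $\partial_t \ket{0_{\mathrm{GS}(t)}}$ in the space of symmetry-respecting vacua. Since $\partial_t \ket{0_{\mathrm{GS}(t)}}$ lies in the tangent space of that variational domain (moving from the ground state at $t$ to the ground state at $t+dt$ is an allowed variation), and since the first variation of the energy density vanishes at the ground state along \emph{every} such direction by~\eqref{DefPropGS}, this term is zero. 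I would want to be slightly careful that the variation in Definition~\ref{dfn:GS} is taken at fixed $t$ (i.e. with the operator $\hat T_{\mu\nu}(t)$ and the metric frozen), which matches exactly the piece we isolated; and to note that $n^\mu n^\nu$ depends only on the metric, not the state, so it passes through the state-variation unaffected.

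The main obstacle I anticipate is making the "chain rule on a family of states" rigorous and frame-independent: one must argue that splitting $\frac{d}{dt}\langle \hat T_{\mu\nu}(t)\rangle_{\mathrm{GS}(t)}$ into an operator-time-derivative piece and a state-derivative piece is well-defined (the states live in different Hilbert spaces a priori, or are related by Bogoliubov transformations), and that the operator-derivative piece really does satisfy the fixed-state conservation law component-wise rather than only after the full contraction. A clean way around this is to work entirely with the scalars $\rho(t)$ and $p(t)$: establish that for a fixed state the identity $\dot\rho_{\text{fixed}} + (n-1)H(\rho+p) = 0$ holds, observe that $p$ is also a state-dependent expectation value but enters the conservation law only multiplied by quantities already accounted for, and then show the difference $\dot\rho_{\mathrm{GS}} - \dot\rho_{\text{fixed}}\big|_{\text{state}=0_{\mathrm{GS}(t)}}$ is exactly the first-order variation~\eqref{DefPropGS}, hence zero. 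The subtlety about whether $p_{\mathrm{GS}(t)}$ needs its own variational condition should dissolve once one checks that only $\rho$, not $p$, is differentiated in time in the relevant term.
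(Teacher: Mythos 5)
Your proposal is correct and follows essentially the same route as the paper's proof: split the total time derivative of $\rho(t;0_{\mathrm{GS}(t)})$ into the operator-evolution piece (which, by diffeomorphism invariance for a fixed state, exactly reproduces the fixed-state conservation law) and the state-parameter piece, and then kill the latter using the fact that the ground state is a critical point of the energy density at fixed $t$, so the derivative along the curve $u \mapsto \ket{0_{\mathrm{GS}(u)}}$ vanishes at $u=t$. The paper formalizes your chain-rule split via the variables $t$, $u$, $\theta = t-u$ and the identity $\partial_t|_\theta = \partial_t|_u + \partial_u|_t$, and it makes the same observation you do that only $\rho$, not $p$, is differentiated in time and hence only $\rho$ needs the variational condition.
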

\begin{proof}
The line element on an $n$-dimensional FLRW spacetime is given by
\begin{align}
\dd s^2
=	\dd t^2 - {a(t)}^2 \, \dd \Sigma_{n-1}^2
\;.
\end{align}
where $\dd \Sigma_{n-1}^2$ is the line element on each spatial section. For any time-parametrized family $\cur{\ket{\psi_t}, t \in \R}$ of states which respect the homogeneity of the FLRW back\-ground, such as the ground states, we can write
\begin{align}
    \bra{\psi_t} \hat{T}_{\mu\nu}(t) \ket{\psi_t}
    \dd x^\mu \, \dd x^\nu
&=  \rho(t; \psi_t) \, \dd t^2
    \nonumber \\ &\hspace{0.5cm}
    + p(t; \psi_t) \, {a(t)}^2 \, \dd \Sigma_{n-1}^2
    \;,
\end{align}
where $\rho(t;\psi_t)$ is the energy density and $p(t;\psi_t)$ is the pressure at time $t$ for each state $\ket{\psi_t}$. Writing the components of the covariant derivative explicitly for the ground state expectation values, we find
\begin{align}
\label{Derv1}
\nabla^\mu {\langle \hat{T}_{\mu 0} (t) \rangle
	}_{\mathrm{GS}(t)}
&=	\pr{n-1} \frac{a'(t)}{a(t)}
	\pr{\rho(t; 0_{\mathrm{GS}(t)}) + p(t; 0_{\mathrm{GS}(t)})}
    \nonumber \\ &\hspace{0.5cm}
	+ \frac{\pd}{\pd t} \, \rho(t; 0_{\mathrm{GS}(t)})
\;, \\ \nonumber
\nabla^\mu {\langle \hat{T}_{\mu j} (t) \rangle
	}_{\mathrm{GS}(t)}
&=	0
\;.
\end{align}
Since the spatial components of the covariant derivative vanish identically, we will focus on the time component.

There are two kinds of time-dependence in $\rho(t; 0_{\mathrm{GS}(t)})$. Firstly, for each fixed ground state $\ket{0_{\mathrm{GS}(u)}}$ at time $u \in \R$, the expectation value $\rho(t;0_{\mathrm{GS}(u)}) \equiv {\langle \hat{T}_{00}(t) \rangle}_{\mathrm{GS}(u)}$ is a function of time $t$, since the operator $\hat{T}_{00}(t)$ evolves over time. The second one is the choice of the parameter $u$ that specifies the time at which the state minimizes the energy density. Our proposal sets these two parameters equal, $u = t$. The same discussion applies to $p(t;0_{\mathrm{GS}(u)})$.

Let's distinguish between the parameters $t$ and $u$ for a moment and define $\theta = t - u$. The quantity $\rho(t;0_{\mathrm{GS}(u)})$ depends on two of these variables independently. The first term in \eqref{Derv1} is understood as first setting $u = t$ and then taking the derivative with respect to $t$. This is equivalent to taking the partial derivative with respect to $t$ while holding $\theta$ fixed and then setting $u=t$,
\begin{align}
\label{Derv2}
&\nabla^\mu {\langle \hat{T}_{\mu 0} (t) \rangle
	}_{\mathrm{GS}(t)}
=	\bigg(
	\frac{\pd \rho(t;0_{\mathrm{GS}(u)})}{\pd t} \,
	{\bigg\vert}_{\theta}
    \nonumber \\ &\hspace{0.5cm}
	+ \pr{n-1} \frac{a'(t)}{a(t)}
	\pr{ \rho(t;0_{\mathrm{GS}(u)})
	    + p(t;0_{\mathrm{GS}(u)}) }
	\bigg) {\bigg\vert}_{u=t}
\;.
\end{align}
On the other hand, the standard conservation law gives $\nabla^\mu {\langle \hat{T}_{\mu 0}(t) \rangle}_{\mathrm{GS}(u)} = 0$ for every fixed state, i.e., for every fixed $u$. If we first evaluate the covariant derivative for fixed $u$ and then set $u=t$, this becomes
\begin{align}
\label{Derv3}
0
&=	\bigg(
	\frac{\pd \rho(t;0_{\mathrm{GS}(u)})}{\pd t} \,
	{\bigg\vert}_{u}
    \nonumber \\ &\hspace{0.5cm}
	+ \pr{n-1} \frac{a'(t)}{a(t)}
	\pr{ \rho(t;0_{\mathrm{GS}(u)})
	    + p(t;0_{\mathrm{GS}(u)}) }
	\bigg) {\bigg\vert}_{u=t}
	%(u=t)
\;.
\end{align}
A simple calculation for the derivatives on the $t$-$u$-space shows that
\begin{align}
\label{Derv4}
\frac{\pd f(t,u)}{\pd t} {\bigg\vert}_{\theta}
= \frac{\pd f(t,u)}{\pd t} {\bigg\vert}_{u}
+ \frac{\pd f(t,u)}{\pd u} {\bigg\vert}_{t}
\end{align}
for every scalar function $f$. Hence, if we subtract \eqref{Derv2} from \eqref{Derv3} and use \eqref{Derv4}, we get
\begin{align}
\label{Derv5}
\nabla^\mu {\langle \hat{T}_{\mu 0} (t) \rangle
	}_{\mathrm{GS}(t)}
&=	%\pr{
    \frac{\pd \rho(t;0_{\mathrm{GS}(u)})}{\pd u}
	{\bigg\vert}_{t ;\, u=t}
	%{\bigg\vert}_{t} } {\bigg\vert}_{u=t}
	%} (u=t)
\;.
\end{align}
Note that we have not used any properties of the ground states up to this point; thus~\eqref{Derv5} holds for any time-dependent family of states. The defining property of an instantaneous ground state $\ket{0_{\mathrm{GS}(t)}}$ at time $t$ is that it minimizes the energy density $\rho(t)$ at that time among all states as in \eqref{DefPropGS}.
%\begin{align}
%    %\pr{
%    {\frac{\delta \rho(t;\psi)}{\delta \psi}}
%    {\bigg\vert}_{t ;\, \psi = 0_{\mathrm{GS}(t)}}
%    %} (\psi = 0_{\mathrm{GS}(t)})
%    = 0
%    \;.
%\end{align}
This implies that $\ket{0_{\mathrm{GS}(t)}}$ also minimizes the instantaneous energy density $\rho(t)$ among the family $\cur{\ket{0_{\mathrm{GS}(u)}}, u \in \R}$ of ground states at different times. Therefore, the right-hand side of \eqref{Derv5} vanishes.
\qedhere
\end{proof}
In conclusion, Theorem \ref{thm:magic} ensures that our proposal for the semiclassical Einstein equation in \eqref{GSEff} and \eqref{EinsteinEff} is consistent with diffeomorphism invariance, i.e., with the Bianchi identity.

Note that the proof of Theorem \ref{thm:magic} relies on diffeomorphism invariance for fixed states, since we use \eqref{Derv3} to convert the time derivative at step \eqref{Derv2} into a derivative over the state parameter at step \eqref{Derv5}. Therefore, the statement of Theorem \ref{thm:magic} does not have an analog in systems that do not possess diffeomorphism invariance, such as the time-dependent harmonic oscillator.

\subsection{Vacua in cosmology}
\label{sec:split}

Naively, one might expect that the momentary ground state is the vacuum state, i.e., the no-particle state. However, in this case, the predicted amount of particle creation would exceed the upper bound from astrophysical observations (see~\cite[p.~73]{Birrell:1982ix} and references therein). As a result, the instantaneous ground state and the related \emph{Hamiltonian diagonalization} were ruled out as vacuum identification criteria. The lowest energy state and the physical no-particle state must be, therefore, distinct states for quantum field theories in generic curved backgrounds.

The candidates for the physical vacuum states that are generally considered to be most plausible for cosmological backgrounds belong to the family of \emph{adiabatic vacuum states}. These vacua are obtained by solving a certain perturbative expansion up to a finite number of derivatives of the metric components. See, e.g., \cite{Luders1990} for a rigorous definition of the concept.
Originally, the adiabatic vacuum states were introduced in~\cite{Parker:1974qw}. 

The basic idea is that the criterion for identifying which state is the vacuum state at any given time, $t$, should be such that the amount of cosmological particle creation which is predicted as a consequence of applying this criterion is minimized.

At this point, we remark that the adiabatic vacuum identification criterion may also be motivated in a new way that is based on first-principles, i.e., without the need to appeal to data. To this end, we begin with the intuition that, when the universe either expands or shrinks, any field state which possesses a nonzero particle content must in some way change over time, the change being due to the fact that the particle content of the state has to either dilute or concentrate. This yields a criterion for identifying the no-particle state. Namely, whatever the criterion for singling out the vacuum state at time $t$ is, it should be such that, when applied over a range of times, the so-obtained vacuum states, parametrized by $t$, should change as little as possible - as determined via Bogolubov $\beta$ coefficients. This then implies the conventional adiabatic vacuum identification criterion: the amount of particle production should be minimal. In this new way, the vacuum is identified as the state that is most immune to dilution and concentration, so that all particle creation or annihilation that does happen due to expansion or shrinkage is solely due to quantum parametric excitation.

Technically, the procedure for identifying the adiabatic vacua is to solve the Wronskian condition\footnote{The Wronskian condition is obtained from the consistency requirement between the canonical commutation relations of the field operators and those of the annihilation and creation operators. More details can be found in the examples in Section \ref{sec:Examples}.} by a WKB-type ansatz and to approach a solution of the equation of motion iteratively around a Minkowski-like 0-th order solution. A finite number $\itero$ of iterations gives an approximate solution failing to be exact only by terms with at least $2\itero+2$ derivatives of the metric. The mode functions for the adiabatic vacua are then found by evaluating the initial conditions for the equation of motion along the approximate solutions. Hence, similarly to the ground state family, the adiabatic vacuum states $\cur{\ket{0_{\mathrm{AV}(t)}}, t \in \R}$ are parametrized by the time at which they are defined. Explicit examples of adiabatic vacuum solutions are provided in Section \ref{sec:Examples}.

The rigorous definition of adiabatic vacua in the literature has focused mostly on free theories and might become non-trivial for generic QFTs. While we did not need to specify a physical vacuum state for our main proposal \eqref{EinsteinEff}, we will rely on this concept in the following for discussing renormalization. Therefore, we include it here as our final assumption.
\begin{asm} \label{asm:AV-exists}
    The adiabatic vacua exist at every time and represent the right choice for physical vacuum states.
\end{asm}
Subtracting the physical~--~here taken to be the adiabatic~--~vacuum family expectation value should already give a renormalized (finite) stress-energy tensor, corresponding to the quantum and classical sources from observed fields,
%Since we still seek to renormalize the stress-energy tensor expectation value, we are interested in the part of it that is excited above the physical vacuum. This contribution should be finite and match the classical sources of energy from the observed particles and fields. Using the adiabatic vacuum family as the physical vacua, we define the ``renormalized'' stress-energy tensor expectation value as
\begin{align}
\label{AVren}
{\langle \hat{T}_{\mu\nu} \rangle
	}_\Psi^{\mathrm{ren.}} (t)
\equiv
    {\langle \hat{T}_{\mu\nu} (t) \rangle
	}_\Psi
	- {\langle \hat{T}_{\mu\nu} (t) \rangle
	}_{\mathrm{AV}(t)}
\;.
\end{align}
Clearly, this is different from the \emph{effective} part defined in \eqref{GSEff}. We can say that, while the effective part measures the gravitating stress-energy excitation, the renormalized part measures the stress-energy in particle excitation over the adiabatic vacuum. The difference between the two,
\begin{align}
\label{AVvac}
{\langle \hat{T}_{\mu\nu} \rangle
	}^{\mathrm{vac.}} (t)
&\equiv
    {\langle \hat{T}_{\mu\nu} \rangle
	}_\Psi^{\mathrm{eff.}} (t)
	- {\langle \hat{T}_{\mu\nu} \rangle
	}_\Psi^{\mathrm{ren.}} (t)
\nonumber \\ &=
    {\langle \hat{T}_{\mu\nu} (t) \rangle
	}_{\mathrm{AV}(t)}
	- {\langle \hat{T}_{\mu\nu} (t) \rangle
	}_{\mathrm{GS}(t)}
\;,
\end{align}
is independent of the particle content, i.e., it is a purely geometrical contribution. It will be discussed in more detail in Section~\ref{sec:III}.
%the part of the physical vacuum expectation value that does act as a gravitational source. 
In conclusion, we can write the effective expectation value as
\begin{align}
\label{AVsplit}
{\langle \hat{T}_{\mu\nu} \rangle
	}_\Psi^{\mathrm{eff.}}
=	{\langle \hat{T}_{\mu\nu} \rangle
	}_\Psi^{\mathrm{ren.}}
	+ {\langle \hat{T}_{\mu\nu} \rangle
	}^{\mathrm{vac.}}~.
\end{align}
See also the schematic representation in Figure~\ref{fig:GS-energy}.
%for the renormalization of the effective stress-energy tensor expectation value. 
The first term, ${\langle \hat{T}_{\mu\nu} \rangle}_\Psi^{\mathrm{ren.}}$, should be finite and match the observed sources of gravitation, while the second term ${\langle \hat{T}_{\mu\nu} \rangle}^{\mathrm{vac.}}$ measures the elevation of the vacuum energy above the ground state. After making this split in~\eqref{GSEff}, we write the semiclassical Einstein equation (for an FLRW spacetime) as
\begin{align}
\label{EinsteinReno}
\frac{1}{8 \pi G} \, G_{\mu\nu}
- \rho_{\Lambda} \, g_{\mu\nu}
- {\langle \hat{T}_{\mu\nu} \rangle
	}^{\mathrm{vac.}}
=	{\langle \hat{T}_{\mu\nu} \rangle
	}_\Psi^{\mathrm{ren.}}
\;.
\end{align}
The vacuum part ${\langle \hat{T}_{\mu\nu} \rangle}^{\mathrm{vac.}}$ is still divergent and acts as a counter-term in this equation.

\begin{figure}
\begin{center}
\includegraphics[width = 0.4 \textwidth]{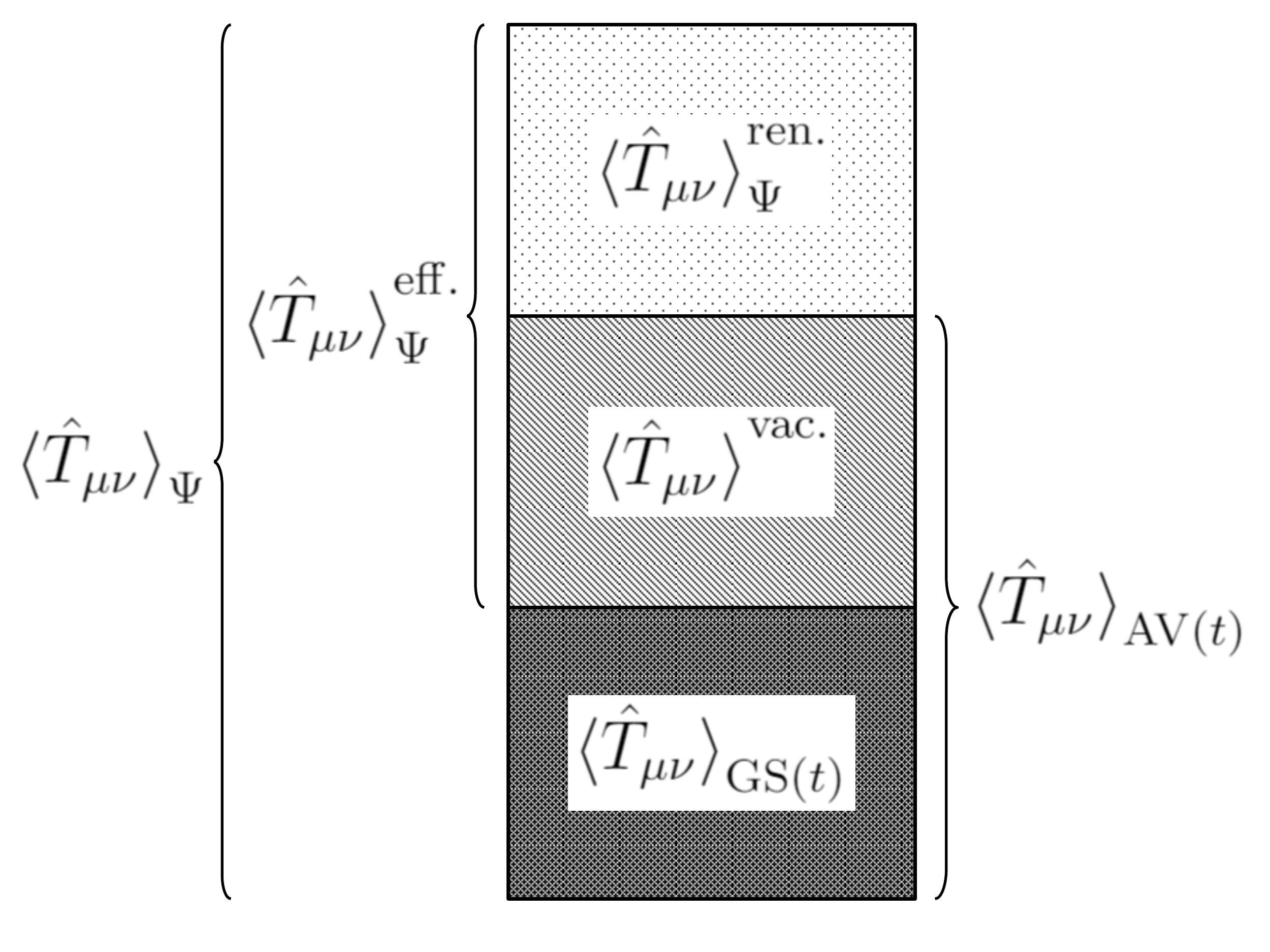}
\end{center}
\caption{The total expectation value ${\langle \hat{T}_{\mu\nu} \rangle}_\Psi$ of the stress-energy tensor are split into various parts based on its ground state value and the (adiabatic) vacuum value. Note that ${\langle \hat{T}_{\mu\nu} \rangle}^{\mathrm{vac.}}$ and ${\langle \hat{T}_{\mu\nu} \rangle}_{\mathrm{GS}(t)}$ are infinite quantities.}
\label{fig:GS-energy}
\end{figure}

Let us further comment on our assumption that the adiabatic vacuum family and the state of the matter fields, $\vert\Psi\rangle$, are such that ${\langle \hat{T}_{\mu\nu} \rangle}_\Psi^{\mathrm{ren.}}$ is finite. In fact, the divergence in the $\vert\Psi\rangle$ expectation value is the same as the divergence in its corresponding no-particle state~\cite{Birrell:1982ix}. Therefore, we are merely assuming that the particles described by $\Psi$ are excitations of the adiabatic vacuum, i.e., that the adiabatic vacuum is the physical no-particle state.

\subsection{Renormalization}\label{sec:III}\label{sec:SubtractionScheme}

Both ${\langle \hat{T}_{\mu\nu} \rangle}_{\mathrm{GS}(t)}$ and ${\langle \hat{T}_{\mu\nu} \rangle}_{\mathrm{AV}(t)}$ only depend on the metric, i.e., they are purely geometric quantities. The former is also covariantly conserved by Theorem \ref{thm:magic}. Since the only covariantly conserved geometric tensors up to second order are the metric $g_{\mu\nu}$ and the Einstein tensor $G_{\mu\nu}$, we can write
\begin{align}
\label{ExpppGS}
{\langle \hat{T}_{\mu\nu}
	\rangle}_{\mathrm{GS}(t)}
=	\mathcal{A}_{\mathrm{GS}} \, g_{\mu\nu}
	+ \mathcal{B}_{\mathrm{GS}} \, G_{\mu\nu}
	+ \text{higher curv.~terms}
\;.
\end{align}
The quantity ${\langle \hat{T}_{\mu\nu} \rangle}_{\mathrm{AV}(t)}$, on the other hand, is \emph{not} covariantly conserved. Never\-the\-less, we prove the following result:
\begin{thm}
\label{thm:AV-conservation}
For adiabatic vacua of at least 2nd order (at least $\itero \geq 1$ iteration), the adiabatic vacuum family expectation value of the stress-energy tensor ceases to be covariantly conserved only by terms that contain at least 4 derivatives of the metric. Schematically, we write this as
\begin{align}
    \nabla^\mu {\langle \hat{T}_{\mu\nu} \rangle}_{\mathrm{AV}(t)}
    = 0 + \mathcal{O}(\pd^4)
    \;.
\end{align}
\end{thm}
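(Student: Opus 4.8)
\emph{Proof proposal.} The natural route is to reduce the statement, exactly as in the proof of Theorem~\ref{thm:magic}, to a derivative of the energy density with respect to the defining time of the reference family. Indeed, the manipulation that produced~\eqref{Derv5} used only homogeneity of the states and the ordinary conservation law $\nabla^\mu{\langle\hat T_{\mu\nu}(t)\rangle}_\psi=0$ for each fixed state, so it applies verbatim to the adiabatic vacuum family $\cur{\ket{0_{\mathrm{AV}(u)}},\, u\in\R}$, which exists and respects the FLRW symmetries by Assumption~\ref{asm:AV-exists}:
\begin{align}
\nabla^\mu {\langle \hat{T}_{\mu 0}(t) \rangle}_{\mathrm{AV}(t)}
= \frac{\pd\, \rho(t; 0_{\mathrm{AV}(u)})}{\pd u} {\bigg\vert}_{t;\, u=t}
\; ,
\end{align}
while the spatial components vanish identically as in~\eqref{Derv1}. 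The task is therefore to estimate how the energy density at time $t$ reacts to an infinitesimal shift of the defining time $u$ of the adiabatic vacuum, on the diagonal $u=t$.

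I would carry this out through the mode functions. For each spatial momentum $k$, the order-$\itero$ adiabatic vacuum at time $u$ is built from the mode function $f_k(t;u)$ that solves the mode equation $\mathcal{D}_t f_k=0$ exactly --- a second-order ODE in $t$ whose coefficients come from the scale factor and carry no explicit $u$-dependence --- subject to the initial data $f_k(u;u)=W_k(u)$ and $(\pd_t f_k)(u;u)=(\pd_t W_k)(u)$, where $W_k$ is the $\itero$-th order WKB function. Since $\mathcal{D}_t$ does not involve $u$, the function $\pd_u f_k(\cdot\,;u)$ also solves $\mathcal{D}_t(\pd_u f_k)=0$; differentiating the two matching conditions with respect to $u$ and using the mode equation to trade $\pd_t^2 f_k$ for lower-order terms yields the initial data
\begin{align}
(\pd_u f_k)(u;u)=0 \;,\qquad (\pd_t\pd_u f_k)(u;u)=(\mathcal{D}_t W_k)(u) \; ,
\end{align}
so the second datum is exactly the residual by which the order-$\itero$ WKB ansatz fails to solve the mode equation. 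By the construction of adiabatic vacua recalled in Section~\ref{sec:split}, that residual has at least $2\itero+2$ derivatives of the metric; hence $\pd_u f_k$ vanishes and $\pd_t\pd_u f_k=\mathcal{O}(\pd^{2\itero+2})$ on the diagonal. The mode energy density is a fixed Hermitian bilinear in $f_k,\pd_t f_k$ and their conjugates with $t$-dependent but $u$-independent coefficients (and analogously, with several mode functions, for higher-spin fields), so every term of $\pd_u\rho(t;0_{\mathrm{AV}(u)})\vert_{u=t}$ either contains the factor $(\pd_u f_k)(u;u)=0$ or is of the schematic form $\overline{(\pd_t f_k)}\,(\pd_t\pd_u f_k)\vert_{t=u}=\mathcal{O}(\pd^{2\itero+2})$. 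Therefore the displayed quantity is $\mathcal{O}(\pd^{2\itero+2})$, which for $\itero\geq1$ is $\mathcal{O}(\pd^4)$, as claimed.

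The step I expect to be the real obstacle is the last one done honestly, namely the momentum sum/integral defining $\rho$: one must verify that it does not degrade the derivative counting --- concretely, that the WKB residual decays in $k$ fast enough (each adiabatic order costing powers of $\omega_k$) that, within the covariant regularization assumed throughout the paper, the integrated quantity is a bona fide local geometric covector built from the metric, every term of which then still carries at least four metric derivatives (in $n=4$ this is where the threshold $\itero\geq1$ is genuinely needed). A useful consistency check on the conclusion is that $\nabla^\mu{\langle\hat T_{\mu\nu}\rangle}_{\mathrm{AV}(t)}$, being such a local geometric covector, cannot contain a $0$- or $2$-derivative contribution at all, since any such piece would have to be a multiple of $\nabla^\mu g_{\mu\nu}$ or $\nabla^\mu G_{\mu\nu}$ and hence vanish; equivalently, up to adiabatic order $2\itero$ the order-$\itero$ adiabatic $\langle\hat T_{\mu\nu}\rangle$ coincides with the stress tensor of a \emph{local} effective action and is thus conserved to that order.
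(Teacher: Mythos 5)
Your proposal is correct and follows essentially the same route as the paper: reduce $\nabla^\mu\langle\hat T_{\mu\nu}\rangle_{\mathrm{AV}(t)}$ to the $u$-derivative of the energy density along the family on the diagonal $u=t$, then bound that derivative by the order $\mathcal{O}(\pd^{2\itero+2})$ at which the WKB iterate fails to solve the exact mode equation. Your mode-function computation of $(\pd_u f_k)(u;u)=0$ and $(\pd_t\pd_u f_k)(u;u)=\mathcal{O}(\pd^{2\itero+2})$ simply makes explicit the step the paper states tersely as ``the adiabatic vacuum at $u+\delta u$ differs from the one at $u$ at equal time only by $\mathcal{O}(\pd^{2\itero+2})\,\delta u$.''
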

\begin{proof}
Similarly to the proof of Theorem \ref{thm:magic}, we distinguish between the time parameter $t$ on which the operator $\hat{T}_{\mu\nu}$ depends and the time parameter $u$ at which an adiabatic vacuum state is defined, before setting $\theta \equiv t - u = 0$. Using again diffeomorphism invariance and the relation \eqref{Derv4} between derivatives, we obtain
\begin{align}
    \nabla^\mu {\langle \hat{T}_{\mu\nu} \rangle}_{\mathrm{AV}(t)}
    &=  \nabla_\mu
        {\langle \hat{T}^\mu{}_{\nu}(t)
        \rangle}_{\mathrm{AV}(u)} \,
        {\Big\vert}_{\theta ;\, u=t}
        %\, (u=t)
    \nonumber \\
    &=  \bigg(
        \nabla_\mu
        {\langle \hat{T}^\mu{}_{\nu}(t)
        \rangle}_{\mathrm{AV}(u)} \,
        {\Big\vert}_{u}
    \nonumber \\ &\hspace{0.8cm}
        + \frac{\pd}{\pd u} \,
	    {\langle \hat{T}^0{}_{\nu}(t)
        \rangle}_{\mathrm{AV}(u)} \,
        {\bigg\vert}_{t}
        \bigg) {\bigg\vert}_{u=t}
        %(u=t)
    \nonumber \\
    &=  \frac{\pd}{\pd u} \,
	    {\langle \hat{T}^0{}_{\nu}(t)
        \rangle}_{\mathrm{AV}(u)} \,
        {\bigg\vert}_{t ;\, u=t}
        %\, (u=t)
    \;.
\end{align}
Now, recall that the approximate solution at $\itero$-th iteration, which is used to define the initial conditions for the adiabatic vacua, ceases to be an exact vacuum solution only at the order $\mathcal{O}(\pd^{2\itero+2})$. Hence, the adiabatic vacuum at $u + \delta u$ differs from the one at $u$ at the equal time $t=u$ only by $\mathcal{O}(\pd^{2\itero+2}) \, \delta u + \mathcal{O}(\delta u^2)$.
Then, we find
\begin{align}
    \nabla^\mu {\langle \hat{T}_{\mu\nu} \rangle}_{\mathrm{AV}(t)}
    &= \mathcal{O}(\pd^{2\itero+2})
    \;.
\end{align}
If we consider the adiabatic vacua at iteration $\itero \geq 1$, the statement of the theorem is proven. \qedhere
\end{proof}
Using Theorem \ref{thm:AV-conservation}, we can also write
\begin{align}
\label{ExpppAV}
{\langle \hat{T}_{\mu\nu}
	\rangle}_{\mathrm{AV}(t)}
=	\mathcal{A}_{\mathrm{AV}} \, g_{\mu\nu}
	+ \mathcal{B}_{\mathrm{AV}} \, G_{\mu\nu}
	+ \text{higher curv.}
\end{align}
for the adiabatic vacuum family, because the failure of conservation is completely contained in the higher curvature terms.

The constant coefficients $\mathcal{A}_{\mathrm{GS}}$, $\mathcal{B}_{\mathrm{GS}}$, $\mathcal{A}_{\mathrm{AV}}$ and $\mathcal{B}_{\mathrm{AV}}$ are independent of the background geometry, i.e., they can only depend on the parameters of the theory, such as the field mass.
\begin{thm}
\label{thm:X-property}
The instantaneous ground states and the adiabatic vacua coincide at the zero-th adiabatic order, i.e., $\mathcal{A}_{\mathrm{GS}} = \mathcal{A}_{\mathrm{AV}}$.
\end{thm}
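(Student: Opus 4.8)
The plan is to compare the two vacuum families in the deep ultraviolet, where both are obtained by the same WKB-type construction and only differ in how many gradient corrections are kept. Concretely, I would write the energy density of an arbitrary symmetry-respecting vacuum as a mode integral (or mode sum, for fermions) over spatial momenta $k$, with the integrand fixed by the mode functions through the Wronskian condition. The zeroth adiabatic order corresponds to the Minkowski-like ansatz in which the mode frequency is taken to be $\omega_k = \sqrt{k^2/a^2 + m^2}$ with no derivative-of-$a$ corrections; this is precisely the leading term of the adiabatic expansion, and it is also the leading term of whatever mode functions minimize $\rho(t)$ at fixed $t$, because minimizing the energy density at an instant is, to leading order in gradients, the same variational problem one solves in Minkowski space.

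First I would make Definition \ref{dfn:GS} explicit at the level of mode functions: parametrize a general homogeneous vacuum by its Bogolubov angle relative to some fiducial basis, impose the Wronskian condition, and write $\rho(t;\psi)$ as a functional of that angle. Extremizing pointwise in $k$ gives the instantaneous ground state mode functions. Then I would expand both this extremum and the order-$\itero$ adiabatic mode functions in the adiabatic (gradient) expansion and observe that at zeroth order they produce the same $\omega_k$, hence the same leading integrand $\propto \omega_k$ in $\rho$. The divergent part of $\rho$ that scales as $g_{\mu\nu}$ — i.e. the cosmological-constant-type piece — is entirely controlled by this zeroth-order integrand, because the $\mathcal{B}\,G_{\mu\nu}$ piece and higher-curvature pieces carry explicit factors of $a'/a$ and higher derivatives and therefore come from the first and higher adiabatic orders. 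Matching the coefficient of $g_{\mu\nu}$ in \eqref{ExpppGS} and \eqref{ExpppAV} then gives $\mathcal{A}_{\mathrm{GS}} = \mathcal{A}_{\mathrm{AV}}$.

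An alternative, perhaps cleaner, route is to work directly with the difference ${\langle \hat{T}_{\mu\nu}\rangle}^{\mathrm{vac.}}$ defined in \eqref{AVvac}: since both terms have the expansions \eqref{ExpppGS} and \eqref{ExpppAV}, the difference is $(\mathcal{A}_{\mathrm{AV}} - \mathcal{A}_{\mathrm{GS}})\,g_{\mu\nu} + (\mathcal{B}_{\mathrm{AV}} - \mathcal{B}_{\mathrm{GS}})\,G_{\mu\nu} + \dots$, and one only needs to show that this difference has no $g_{\mu\nu}$ term. The Bogolubov coefficients between the order-$\itero$ adiabatic vacuum and the instantaneous ground state at equal time $t=u$ are both, individually, equal to the Minkowski mode functions up to gradient corrections; hence $|\beta_k|^2$ between them is $\mathcal{O}(\pd^2)$ for every $k$ and in fact falls off fast in $k$. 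The stress-energy difference is then a convergent mode integral of $|\beta_k|^2$ (times kinematic factors), which manifestly produces no momentum-independent, volume-divergent piece — i.e. no $g_{\mu\nu}$ contribution — leaving only $G_{\mu\nu}$ and higher-curvature terms.

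The main obstacle I anticipate is making the ``zeroth adiabatic order'' statement rigorous and basis-independent: the adiabatic order of a vacuum is defined through a particular WKB iteration scheme, whereas the instantaneous ground state is defined by a variational principle, and one must show these two notions of ``leading order'' literally agree, not merely that both reduce to something Minkowski-like. Care is also needed because for non-conformally-coupled or massive fields the would-be $g_{\mu\nu}$ coefficient is where the dangerous $m^n$-type divergences sit, so the argument has to track exactly which powers of $k$ in the integrand are momentum-independent after regularization; a covariant regulator must be used consistently for both families so that the comparison of $\mathcal{A}$-coefficients is scheme-independent. Once it is established that the equal-time Bogolubov mixing between the two families is purely a gradient effect, the vanishing of the $g_{\mu\nu}$ piece — and hence $\mathcal{A}_{\mathrm{GS}} = \mathcal{A}_{\mathrm{AV}}$ — follows, and with it (anticipating the paper's conclusion) the statement that the subtracted vacuum energy does not renormalize $\rho_\Lambda$.
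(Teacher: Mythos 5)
Your proposal is correct in substance but follows a genuinely different route from the paper. The paper's proof is a short, non-perturbative deformation argument: replace $a(t)$ by $a(t_0+\epsilon(t-t_0))$ and send $\epsilon\to 0$; since $\mathcal{A}_{\mathrm{GS}},\mathcal{B}_{\mathrm{GS}},\mathcal{A}_{\mathrm{AV}},\mathcal{B}_{\mathrm{AV}}$ are background-independent constants they do not change under the flattening, all curvature terms in \eqref{ExpppGS} and \eqref{ExpppAV} vanish in the limit, and both families reduce to the Minkowski vacuum, forcing $\mathcal{A}_{\mathrm{GS}}\,g_{\mu\nu}=\mathcal{A}_{\mathrm{AV}}\,g_{\mu\nu}=\langle\hat T_{\mu\nu}\rangle_{\mathrm{M}}$. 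You instead work at the level of mode functions: you show that the variational minimizer of Definition \ref{dfn:GS} and the WKB iteration share the same zeroth-gradient frequency $\omega_k$, and that the $g_{\mu\nu}$ coefficient is fixed entirely by that derivative-free piece. Your argument buys an explicit mechanism (and is exactly what the worked examples in Section \ref{sec:Examples} verify), but it presupposes a free-field mode decomposition and a Bogolubov parametrization, whereas the paper's flattening argument applies to any QFT satisfying Assumptions \ref{asm:GS-exists} and \ref{asm:AV-exists} and sidesteps the very issue you flag as the main obstacle (making ``zeroth adiabatic order'' of a variationally defined state rigorous). One small correction to your second route: the reason the difference \eqref{AVvac} has no $g_{\mu\nu}$ part is purely that every contribution to it carries at least two derivatives of $a$ (your $|\beta_k|^2=\mathcal{O}(\pd^2)$ observation), not that it lacks a ``volume-divergent'' piece --- the $\mathcal{A}$ coefficients are distinguished from the $\mathcal{B}$ coefficients by derivative counting, not by their divergence structure, and $\mathcal{A}$ could in principle be finite.
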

\begin{proof}
We consider smoothly flattening the FLRW background as follows: We replace the scale factor $a(t)$ with $a(t_0 + \epsilon(t- t_0))$ for arbitrary $t_0 \in \R$ and shift the value of $\epsilon$ from $1$ to $0$. The coefficients $\mathcal{A}_{\mathrm{GS}}$, $\mathcal{B}_{\mathrm{GS}}$, $\mathcal{A}_{\mathrm{AV}}$ and $\mathcal{B}_{\mathrm{AV}}$ are independent of $\epsilon$. When $\epsilon$ reaches $0$, the spacetime becomes flat and all curvature terms in \eqref{ExpppGS} and \eqref{ExpppAV} vanish.

Note that the Minkowski vacuum serves as both the ground state and the adiabatic vacuum on a Minkowski spacetime. Therefore, the expectation values ${\langle \hat{T}_{\mu\nu}\rangle}_{\mathrm{GS}(t)}$ and ${\langle \hat{T}_{\mu\nu}\rangle}_{\mathrm{AV}(t)}$ coincide with ${\langle \hat{T}_{\mu\nu}\rangle}_{\mathrm{M}}$ in the flat limit $\epsilon \rightarrow 0$. Since they also converge to $\mathcal{A}_{\mathrm{GS}} \, g_{\mu\nu}$ and $\mathcal{A}_{\mathrm{AV}} \, g_{\mu\nu}$, respectively, and the coefficients are unchanged by the flattening, we conclude $\mathcal{A}_{\mathrm{GS}} = \mathcal{A}_{\mathrm{AV}}$.\qedhere
\end{proof}

By Theorem \ref{thm:X-property}, \eqref{AVvac} becomes
\begin{align}
\label{Vacco}
{\langle \hat{T}_{\mu\nu} \rangle
	}^{\mathrm{vac.}}
	=	\pr{\mathcal{B}_{\mathrm{AV}}
		- \mathcal{B}_{\mathrm{GS}}} G_{\mu\nu}
	+ \text{higher curv.}
%\;.
\end{align}
Then, we may define the renormalized couplings for the semiclassical Einstein equation \eqref{EinsteinReno} as
\begin{align}
\label{ParaReno}
\rho_{\Lambda}^{\mathrm{ren.}}
=	\rho_{\Lambda}
\quad \text{and} \quad
G^{\mathrm{ren.}}
=	\frac{G}{1 - 8 \pi G
		\pr{\mathcal{B}_{\mathrm{AV}}
		- \mathcal{B}_{\mathrm{GS}}} }
\;.
\end{align}
In accordance with the truncation of the Einstein-Hilbert action, we neglect the higher curvature terms in \eqref{Vacco}. After the renormalization of the parameters, the semiclassical Einstein equation \eqref{EinsteinReno} can finally be written as:
\begin{align}
\label{EinsteinReno2}
\frac{1}{8 \pi G^{\mathrm{ren.}}} \, G_{\mu\nu}
- \rho_{\Lambda}^{\mathrm{ren.}} \, g_{\mu\nu}
=	{\langle \hat{T}_{\mu\nu} \rangle
	}_\Psi^{\mathrm{ren.}}
\;.
\end{align}

By neglecting higher curvature terms, we are neglecting two features of the equation: Firstly, higher curvature terms would contribute to the renormalization of the parameters of a higher curvature gravity theory, beyond the Einstein-Hilbert action. This problem is related to the perturbative non-renormalizability of gravity and is beyond the scope of this paper.

Secondly, the right-hand side of \eqref{EinsteinReno2} fails to be covariantly conserved at higher-than-second adiabatic order, i.e., at the neglected higher curvature terms. This slight violation of the Bianchi identity in the finite part is due entirely to the choice of adiabatic vacua to separate the finite part of the stress-energy tensor from the vacuum contribution. It is neither a consequence of our main proposal in \eqref{EinsteinEff}, nor does it change the fact that, due to Theorem 1, the stress-energy that sources gravity strictly obeys the Bianchi identity.

\subsection{Discussion}
\label{sec:Discussion}

We have argued that the stress-energy expectation of the instantaneous ground state, ${\langle \hat{T}_{\mu\nu} \rangle}_{\mathrm{GS}(t)}$, is a good choice for subtraction from the full stress-energy of the actual state of the matter fields for two reasons: Firstly, this subtraction as in \eqref{GSEff} maintains a positive energy density for the difference, consistent with the weak energy condition. Secondly, ${\langle \hat{T}_{\mu\nu} \rangle}_{\mathrm{GS}(t)}$ is covariantly conserved exactly, as we have shown in Theorem \ref{thm:magic}.

The proposed scheme has direct consequences for the renor\-ma\-li\-zation of the cosmological constant. Recall that the counter-term in~\eqref{EinsteinReno} is a difference of the stress-energy tensor expectation values in the physical (adiabatic) vacuum and in the ground state. As we have shown in Section \ref{sec:SubtractionScheme}, and in \eqref{ParaReno} in particular, this counter-term does not affect the value of the parameter $\rho_\Lambda = \frac{1}{8\pi G} \, \Lambda$. This particular combination of parameters that is linearly related to the cosmological constant becomes protected from vacuum fluctuations. In this sense, our proposal contributes to the resolution of the cosmological constant problem under the given assumptions.

We assumed that the Universe is well described by an FRLW metric at cosmological scales, and that instantaneous ground states exist. Furthermore, we assumed the existence of adiabatic vacua as physical vacua in discussing the renormalization of the Einstein equation. In Section~\ref{sec:Examples}, we analyze three free QFTs, where Assumptions \ref{asm:GS-exists} and \ref{asm:AV-exists} are shown explicitly to hold. For interacting theories the definition and existence of ground states and adiabatic vacua might be more challenging. In specific theories, the one-loop order can be worked out following the techniques in \cite{Markkanen:2013nwa}.

Free theories still provide important insights on the renormalization of the cosmological constant at different scales, e.g., considering new degrees of freedom that might come into play as the renormalization scale is dialed. Just like for their low energy companions, the contribution of higher energy degrees of freedom to the cosmological constant will be cancelled by the ground state expectation value.

Finally, note that ground states and adiabatic vacua have different coefficients, $\mathcal{B}_{\mathrm{GS}}$ and $\mathcal{B}_{\mathrm{AV}}$, at second adiabatic order. This is due to the local curvature ambiguities in the definition of the stress-energy expectation value as discussed in~\cite{Wald:1995yp}. Therefore, when we subtract the ground state family expectation value, we do not remove the vacuum contribution completely, ${\langle \hat{T}_{\mu\nu} \rangle}^{\mathrm{vac.}} \neq 0$. In this sense, we still allow vacuum fluctuations to play a role in the renormalization of gravity in curved backgrounds.

\section{Examples}\label{sec:Examples}

In this section, we compute the stress-energy tensor expectation value in the instan\-ta\-neous ground states and adiabatic vacua, and confirm the results of the last section in three different models: a scalar, a Dirac spinor and a Proca vector field. Some, but not all, of the results presented in this section are present in the literature \cite{Birrell:1982ix,Kaya_2011,Cherkas:2006kx,Bilic:2010xd}, and we find it convenient to rederive them here. These toy models demonstrate explicitly how the instantaneous ground states provide a covariantly conserved expectation value; how the adiabatic vacua fulfill the same property up to second adiabatic order; and how the subtraction cancels the radiative contributions to the cosmological constant but not those to the gravitational constant.

Regularization is an essential step when dealing with a divergent expectation value. To ensure that the vacuum expectation value of the stress-energy tensor has the correct properties, e.g., it is covariantly conserved, it is important to use a covariant regularization method. For bosonic theories, we use dimensional regularization; for the spin-$1/2$ theory, we use Pauli-Villars regularization\footnote{Note that the divergence in the spin-$1/2$ theory is logarithmic, while it is quadratic in the bosonic theories, therefore we could not reliably use Pauli-Villars regularization for the latter~\cite{Ossola2003}. On the other hand, the generalization of gamma matrices to an arbitrary number of dimensions is a tricky issue, therefore we chose not to use dimensional regularization in the fermionic theory.} and fix the number of spacetime dimensions to 4.

The only dimensionful coupling included in our examples is the gravitational constant, which, in $n$ spacetime dimensions, has mass dimension $[G^{(n)}] = M^{2-n}$. In dimensional regularization, in order to preserve the correct dimensionality while expanding around $n=4$, we introduce $G^{(n)} = G \mu^{4-n}$ where $\mu$ is an arbitrary mass scale and $[G] = M^{-2}$. The second equation in \eqref{ParaReno} becomes
\begin{align}
\label{Renorm2}
G^{\mathrm{ren.}}
=	\frac{G}{1 - 8 \pi G
		\pr{\mathcal{B}_{\mathrm{AV}}
		- \mathcal{B}_{\mathrm{GS}}}
		\mu^{4-n} }
    \;.
\end{align}
For simplicity, we specialize throughout this section to the flat FLRW metric with line element
\begin{align}
\label{Metric-FLRW-ConfFlat}
\dd s^2
=	{a(t)}^2 \pr{\dd t^2
		- \sum_{i=1}^{n-1} \dd x_i^2}
\;,
\end{align}
where $t = x_0$ is the conformal time and $x_i$ are comoving coordinates. We denote derivatives with respect to the conformal time $t$ by a prime $'$.

\subsection{Spin 0}

Consider a scalar field $\phi$ with the action
\begin{align}
S_\phi = \int \dd^n x \; \sqrt{\vrr{g}} \pr{
	\half \, g^{\mu\nu} \, \pd_\mu \phi \, \pd_\nu \phi
	- \half \pr{m^2 - \xi R} \phi^2
	}
\;,
\end{align}
where $m$ is the mass, $\xi$ is a dimensionless coupling constant and $R$ is the Ricci scalar. The equation of motion for $\phi$ is given by
\begin{align}
\phi'' + \pr{n-2} \frac{a'}{a} \, \phi'
- \vec{\pd}^2 \phi + a^2 \pr{m^2 - \xi R} \phi = 0
\;.
\end{align}

\subsubsection*{Decomposition.}

We decompose the field $\phi$ into its Fourier modes,
\begin{align}
\phi(t,x)
&=	\int \frac{\dd^{n-1}k}{\pr{2\pi}^{\pr{n-1}/2}}
	\frac{1}{\sqrt{2}} \, {a(t)}^{-\pr{n-2}/2}
	\nonumber \\ &\hspace{0.9cm} \times
	\pr{\chi_k^{\phantom{*}}(t) \, e^{-ikx} \,
		\hat{a}_k^{\phantom{\dagger}}
		+ \chi_k^*(t) \, e^{ikx} \, \hat{a}_k^\dagger}
\;,
\end{align}
where $\hat{a}_k^{\phantom{\dagger}}$ and $\hat{a}_k^\dagger$ are annihilation and creation operators, and $\chi_k^{\phantom{*}}$ is a complex mode function. Note that this decomposition is not unique: One can choose a different set of annihilation and creation operators or, equivalently, a different set of mode functions. Each set of operators defines a vacuum according to $\hat{a}_k \ket{0}=0$, $\forall k$.

The mode functions satisfy the equation of motion
\begin{align}
\label{Scalar-EOM2}
    \chi_k'' + \pr{k^2 + m^2 a^2 + \pr{\xi_n - \xi} a^2 R} \chi_k = 0
    \;,
\end{align}
where we defined $\xi_n \equiv \frac{n-2}{4\pr{n-1}}$. The scalar field is minimally coupled when $\xi = 0$ and conformally coupled when $\xi = \xi_n$.

We perform the canonical quantization by imposing canonical commutation relations $[ \hat{\phi}(t,x), \hat{\Pi}(t,y) ] = i \, \delta^{n-1}(x-y)$, where $\Pi \equiv \delta S_\phi / \delta \phi' = a^{n-2} \, \phi'$ is the conjugate momentum, as well as $[ \hat{a}_k^{\phantom{\dagger}}, \hat{a}_l^\dagger ] = \delta^{n-1}(k-l)$. Consistency between the commutation relations requires the Wronskian condition
\begin{align}
\label{Scalar-Wronskian}
    \chi_k^{\phantom{*}} \, \chi_k'^*
    - \chi_k' \, \chi_k^{*} = 2 i
    \;.
\end{align}
Since~\eqref{Scalar-EOM2} is a second-order differential equation of a complex function, the space of solutions is 4-dimensional for each mode $k$. One of them is an arbitrary global phase. The Wronskian condition \eqref{Scalar-Wronskian} constrains one more degree of freedom. Then, we are left with 2 \emph{physical} degrees of freedom for $\chi_k$ (for each $k$). 

It is possible to decouple the non-physical degrees of freedom from the mode functions as follows: We write the mode function $\chi_k$ in the polar form as $\chi_k = R_k \, \exp{-i S_k}$, where $R_k(t)$ and $S_k(t)$ are real functions. In these new variables, the Wronskian condition \eqref{Scalar-Wronskian} becomes $R_k^2 \, S_k' = 1$. This equation is solved by any positive real function $\Omega_k$ with $\Omega_k = R_k^{-2}$ and $S_k(t) = \int^t \Omega_k(\bar{t}) \, \dd\bar{t}$ with an arbitrary lower limit of integration,
\begin{align}
\label{Scalar-WKB}
    \chi_k(t) = \frac{1}{\sqrt{\Omega_k(t)}} \,
    \exp{-i \int^t \Omega_k(\bar{t}) \, \dd\bar{t}}
    \;.
\end{align}
Then, the equation of motion \eqref{Scalar-EOM2} becomes
\begin{align}
\label{Scalar-EOM3}
    \Omega_k^2 &= \omega_k^2
    + \pr{\xi_n - \xi} a^2 R
    + \frac{3}{4} \frac{\Omega_k'^2}{\Omega_k^2}
    - \half \frac{\Omega_k''}{\Omega_k}
    \;.
\end{align}
Here, and for the examples to follow, $\omega_k^2 = k^2 + m^2 a^2$.

The solutions $\Omega_k$ of \eqref{Scalar-EOM3} have 2 degrees of freedom (for each $k$), which correspond to the 2 physical degrees of freedom in $\chi_k$. We solved the Wronskian condition and the global phase went into the arbitrary lower integration bound in \eqref{Scalar-WKB}. 
The problem of defining the vacuum has reduced to choosing a solution to the non-linear differential equation \eqref{Scalar-EOM3}.

\subsubsection*{Stress-energy tensor.}

The stress-energy tensor $T_{\mu\nu}$ for the scalar field $\phi$ is given by
\begin{align}
\label{Scalar-SETOp}
    T_{\mu\nu}
    &=	\pd_\mu \phi \, \pd_\nu \phi
	+ \xi R_{\mu\nu} \, \phi^2
	- \xi \, \nabla_\mu \nabla_\nu (\phi^2)
	+ \xi \, g_{\mu\nu} \, \square(\phi^2)
	\nonumber \\ & \hspace{0.5cm}
	- \half \, g_{\mu\nu} \, g^{\alpha\beta} \,
		\pd_\alpha \phi \, \pd_\beta \phi
	+ \half \, g_{\mu\nu} \pr{m^2 - \xi R} \phi^2
	\;.
\end{align}
We promote this to an operator $\hat{T}_{\mu\nu}$ with \emph{symmetrized} operator ordering\footnote{Normal ordering, the standard operator ordering in QFT on flat spacetimes, does not have a generally covariant analogue, and cannot be implemented in our setting. Symmetrized ordering, on the other hand, is covariant.}, e.g., $\phi \phi' \rightarrow \half \{ \hat{\phi} , \hat{\phi}' \}$,~\cite{Bunch}
\subeq{\label{Scalar-SETVEV}\begin{align}
\label{Scalar-Energy}
    \bra{0} \hat{T}_{00} \ket{0} &=
    \frac{1}{4 a^{n-2}}
    \int \frac{\dd^{n-1}k}{\pr{2\pi}^{n-1}} \,
    \frac{1}{\Omega_k}
        \nonumber \\ & \hspace{0.5cm} \times
        \pr{
        k^2 + m^2 a^2
        + \Omega_k^2
        + \frac{\Omega_k'^2}{4\Omega_k^2}
        + \Xi}
\;, \displaybreak[0] \\
    \bra{0} \hat{T}_{0j} \ket{0} &= 0
\;, \displaybreak[0] \\
\label{Scalar-Pressure}
    \bra{0} \hat{T}_{ij} \ket{0} &=
    \frac{1}{4 a^{n-2}}
    \int \frac{\dd^{n-1}k}{\pr{2\pi}^{n-1}} \,
    \frac{1}{\Omega_k}
        \nonumber \\ & \hspace{0.5cm} \times
        \bigg(
        2 k_i k_j
        + \delta_{ij} \pr{\xi_n - \xi} 4 \xi a^2 R
        + \delta_{ij} \, \Xi
        \nonumber \\ & \hspace{0.5cm}
        + \delta_{ij} \pr{1 - 4\xi} \pr{
        \Omega_k^2 - k^2 - m^2 a^2
        + \frac{\Omega_k'^2}{4 \Omega_k^2} }
        \!\bigg)
\;,
\end{align}}
where
\begin{align}
    \Xi &\equiv \pr{\xi_n - \xi} \pr{n-1} \frac{
        a' \pr{\pr{n-2} a' \, \Omega_k
        + 2 a \, \Omega_k'}}{
        a^2 \, \Omega_k}
\end{align}
and $i,j \in \cur{1,...,n-1}$. %The vacuum state here is determined by the choice of $\Omega_k$.

Note that if $\Omega_k$ depends only on the modulus $k \equiv \vert \vec{k} \vert$ of the mode vector $\vec{k}$, then all non-diagonal terms of $\bra{0} \hat{T}_{\mu\nu} \ket{0}$ vanish. This is the case for the adiabatic and ground state vacua that we will consider. The integrals are also simplified by the imposition of spherical symmetry. %$\int \frac{\dd^{n-1}k}{\pr{2\pi}^{n-1}} \rightarrow \frac{2}{\pr{4\pi}^{(n-1)/2} \, \Gamma\!\cor{\frac{n-1}{2}}} \int_0^\infty \dd k \, k^{n-2}$.

\subsubsection*{Ground state.}

The instantaneous ground state at time $t$ is given by the initial values $\cur{\Omega_k(t),\Omega_k'(t)}$ which minimize the energy density $\bra{0} \hat{T}_{00} \ket{0}(t)$ at that time. Solving the equations $\frac{\pd}{\pd \Omega_k(t)} \bra{0} \hat{T}_{00} \ket{0}(t) = 0 = \frac{\pd}{\pd \Omega_k'(t)} \bra{0} \hat{T}_{00} \ket{0}(t)$ for \eqref{Scalar-Energy}, we find
\subeq{\label{Scalar-GS-IC}\begin{align}
    \Omega_k(t) &=
    \sqrt{k^2 + m^2 \, {a(t)}^2
        + 4 \pr{n-1}^2 \xi \pr{\xi_n - \xi}
            \frac{{a'(t)}^2}{{a(t)}^2}}
    \;,\\
    \Omega_k'(t) &=
    4 \pr{n-1} \pr{\xi - \xi_n} \frac{a'(t)}{a(t)}
        \, \Omega_k(t)
    \;.
\end{align}}
We clarify here that, while these initial conditions can be used together with \eqref{Scalar-EOM3} and \eqref{Scalar-WKB} to define a ground state mode function at all times, we wish to compute our expectation values at every time $t$ on the ground state at that time. The same holds true for the instantaneous adiabatic vacuum.

We therefore substitute the expressions \eqref{Scalar-GS-IC} at every time $t$ into \eqref{Scalar-SETVEV} in order to find the ground state expectation value of the stress-energy tensor at the instant of minimum energy,
\subeq{\label{AdAll-Scalar}\begin{align}
    {\langle \hat{T}_{00} \rangle}_{\mathrm{GS}(t)} &=
    - \frac{a^2}{2} \,
    \frac{1}{\pr{4\pi}^{n/2}} \,
    \Gamma\!\cor{-\frac{n}{2}}
    \nonumber \\ &\hspace{0.5cm} \times
    \pr{m^2 + 4 \pr{n-1}^2 \xi \pr{\xi_n - \xi}
        \frac{a'^2}{a^4}}^{\frac{n}{2}}
    \;, \\
    {\langle \hat{T}_{jj} \rangle}_{\mathrm{GS}(t)} &=
    \frac{a^2}{2} \,
    \frac{1}{\pr{4\pi}^{n/2}} \,
    \Gamma\!\cor{-\frac{n}{2}}
    \nonumber \\ &\hspace{0.5cm} \times
    \pr{m^2 + 4 \pr{n-1}^2 \xi \pr{\xi_n - \xi}
        \frac{a'^2}{a^4}}^{\frac{n}{2}-1}
    \nonumber \\ &\hspace{0.5cm} \times
    \pr{m^2 - 2 \xi \pr{\xi_n - \xi} \mathcal{R} }
    \;,
\end{align}}
where $\mathcal{R} \equiv n R + \pr{n^2 - 2 n + 2} \pr{n-1} \frac{a'^2}{a^4}$. Using these expressions and
\begin{align}
\label{AdAll-CovD}
\nabla^\mu {\langle \hat{T}_{\mu 0} \rangle
	}%_{\mathrm{GS}(t)}
&=	\frac{1}{a^2} \, \frac{\pd}{\pd t} \,
        {\langle \hat{T}_{00} \rangle
	    }%_{\mathrm{GS}(t)}
	+ \pr{n-3} \frac{a'}{a^3} \,
	    {\langle \hat{T}_{00} \rangle
	    }%_{\mathrm{GS}(t)}
    \nonumber \\ &\hspace{0.5cm}
	+ \pr{n-1} \frac{a'}{a^3} \,
	    {\langle \hat{T}_{jj} \rangle
	    }%_{\mathrm{GS}(t)}
\;,
    \nonumber \\
\nabla^\mu {\langle \hat{T}_{\mu j} \rangle
	}%_{\mathrm{GS}(t)}
&=	0
\;,
\end{align}
for the metric in \eqref{Metric-FLRW-ConfFlat}, one can check that ${\langle \hat{T}_{\mu\nu} \rangle}_{\mathrm{GS}(t)}$ is covariantly conserved.

For the renormalization of $\rho_\Lambda$ and $G$, we expand \eqref{AdAll-Scalar} according to the number of derivatives on the scale factor $a(t)$. We can write the result as
\begin{align}
\label{Scalar-T-GS}
    {\langle \hat{T}_{\mu\nu} \rangle
	}_{\mathrm{GS}(t)} &=
    \pr{\frac{m^2}{4\pi}}^{n/2} \,
    \Gamma\!\cor{-\frac{n}{2}}
    \nonumber \\ &\hspace{0.5cm} \times
    \pr{
        - \half \, g_{\mu\nu}
        + \frac{n}{2} \, \frac{\xi}{\xi_n}
            \pr{\xi - \xi_n} m^{-2} \, G_{\mu\nu} }
\end{align}
up to $\mathcal{O}(\pd^4)$ terms.

\subsubsection*{Adiabatic vacuum.}

In order to define the adiabatic vacua, we try to solve \eqref{Scalar-EOM3} iteratively, such that
\subeq{\begin{align}
    \pr{W_k^{[0]}(t)}^2 &\equiv \omega_k(t)^2
    \;, \\
    \pr{W_k^{[\itero+1]}(t)}^2 &\equiv \omega_k(t)^2
    + \pr{\xi_n - \xi} {a(t)}^2 R(t)
    \nonumber \\ &\hspace{0.5cm}
    + \frac{3}{4} \pr{\frac{W_k^{[\itero]}{}'(t)}{W_k^{[\itero]}(t)}}^2
    - \half \frac{W_k^{[\itero]}{}''(t)}{W_k^{[\itero]}(t)}
    \;.
\end{align}}
The 2nd order adiabatic vacuum at time $t$ are defined by the initial conditions
\begin{align}
\label{Scalar-Adiabatic}
    \Omega_k(t) = W_k^{[1]}(t)
    \;, \qquad
    \Omega_k'(t) = W_k^{[1]}{}'(t)
    \;.
\end{align}
Note that if we had chosen to use a higher order adiabatic vacuum, these equations would change only up to $\mathcal{O}(\pd^4)$. In order to find the adiabatic vacuum expectation value of the stress-energy tensor, we substitute \eqref{Scalar-Adiabatic} into \eqref{Scalar-Energy} and \eqref{Scalar-Pressure}. After expanding the result around the number of derivatives on the scale factor, we get
\begin{align}
\label{Scalar-T-AV}
    {\langle \hat{T}_{\mu\nu} \rangle
	}_{\mathrm{AV}(t)} &=
    \pr{\frac{m^2}{4\pi}}^{n/2} \,
    \Gamma\!\cor{-\frac{n}{2}}
    \nonumber \\ &\hspace{0.5cm} \times
    \pr{
        - \half \, g_{\mu\nu}
        + \frac{n\pr{1-6\xi}}{12} \, m^{-2} \, G_{\mu\nu} }
\end{align}
up to $\mathcal{O}(\pd^4)$ terms.

\subsubsection*{Result.}

Our results \eqref{Scalar-T-GS} and \eqref{Scalar-T-AV} confirm Theorems \ref{thm:magic} and \ref{thm:AV-conservation}, since they are a linear combination of the covariantly conserved tensors $g_{\mu\nu}$ and $G_{\mu\nu}$. They also confirm Theorem \ref{thm:X-property} as
\begin{align}
    \mathcal{A}_\mathrm{GS}
    = \mathcal{A}_\mathrm{AV}
    = - \half \pr{\frac{m^2}{4\pi}}^{n/2} \,
    \Gamma\!\cor{-\frac{n}{2}}
\;.
\end{align}
Finally, by subtracting \eqref{Scalar-T-AV} and \eqref{Scalar-T-GS}, we obtain
\begin{align}
\label{Scalar-BB}
    {\langle \hat{T}_{\mu\nu} \rangle
	}^{\mathrm{vac.}}
	=	\mathcal{B}_\Delta \, G_{\mu\nu}
	    + \mathcal{O}(\pd^4)
\;,
\end{align}
where $\mathcal{B}_\Delta = \mathcal{B}_{\mathrm{AV}} - \mathcal{B}_{\mathrm{GS}}$ is given for the scalar theory by
\begin{align}
    \mathcal{B}_\Delta =
    \pr{\frac{m^2}{4\pi}}^{n/2} \,
    \Gamma\!\cor{-\frac{n}{2}}
    \frac{n}{12} \pr{1 - \frac{24\pr{n-1}}{n-2} \, \xi^2}
    m^{-2}
\;.
\end{align}
The equation \eqref{Scalar-BB} confirms the results of Section \ref{sec:SubtractionScheme}.

In order to complete dimensional renormalization, we expand $\mathcal{B}_\Delta = \mathcal{B}_\Delta(n)$ around $n=4$. Depending on the parameter $\xi$, this can be accomplished in two different ways. If we fix the parameter $\xi$ to the conformal coupling number $\xi = \xi_n$ for every dimension $n$, we get a finite result at $n=4$, namely
\begin{align}
    \mu^{4-n} \, \mathcal{B}_\Delta
    \, \big\vert_{\xi = \xi_n ,\, n=4}
    =   \frac{m^2}{288 \pi^2}
    \;.
\end{align}
Alternatively, we can fix the parameter $\xi$ to a constant independent of $n$ and make a Laurent expansion around $n=4$ to get
\begin{align}
    \mu^{4-n} \, \mathcal{B}_\Delta \big\vert_{\xi = \mathrm{const.}} (n)
    &=   \frac{\pr{1 - 36 \xi^2} m^2}{48 \pi^2 \pr{4-n}}
    \nonumber \\ &\hspace{0.5cm}
    - \frac{\pr{1 - 36 \xi^2} m^2}{96 \pi^2}
        \pr{\gamma + \log{\frac{m^2}{4\pi \mu^2}}}
    \nonumber \\ & \hspace{0.5cm}
    + \frac{\pr{1 - 48 \xi^2} m^2}{96 \pi^2}
    + \mathcal{O}(4-n)
    \;,
\end{align}
where $\gamma$ is the Euler-Mascheroni constant.

\subsection{Spin 1/2}

Next, we consider a massive Dirac fermion $\Psi(t,x)$ in 4 dimensions,
\begin{align}
S_\Psi
    &=	\int \dd^4 x \, \sqrt{-g} \, \Lag_\Psi
\;, \\ \nonumber
\Lag_\Psi &\equiv
		\half \, i \pr{
		\bar{\Psi} \, \gamma^\alpha \,
		\bar{e}_\alpha^{\;\;\mu} \, \nabla_\mu \Psi
		- \pr{\nabla_\mu \bar{\Psi}}
		\bar{e}_\alpha^{\;\;\mu} \, \gamma^\alpha \, \Psi
		}
		- m \bar{\Psi} \Psi
\;.
\end{align}
Here, $e^\alpha_{\;\;\mu} = \operatorname{diag}(a,a,a,a)$ is a tetrad, which satisfies $g_{\mu\nu} = e^\alpha_{\;\;\mu} \, e^\beta_{\;\;\nu} \, \eta_{\alpha\beta}$ for a local flat metric $\eta_{\alpha\beta} = \operatorname{diag}(+,-,-,-)$, while $\bar{e}_\alpha^{\;\;\mu}$ is its inverse. The constant gamma matrices $\cur{\gamma^\alpha}_{\alpha = 0,...,3}$ satisfy $\cur{\gamma^\alpha,\gamma^\beta} = 2 \eta^{\alpha\beta}$ and their curved counterparts are defined as $\tilde{\gamma}^\mu \equiv \bar{e}_\alpha^{\;\;\mu} \, \gamma^\alpha$~\cite{Birrell:1982ix}. We use the Dirac representation
\begin{align}
\gamma^0 = \mtrx{\one & 0 \\ 0 & - \one}
\;, \qquad
\gamma^j = \mtrx{0 & \sigma^j \\ - \sigma^j & 0}
\;.
\end{align}
The conjugate spinor is defined as $\bar{\Psi} = \Psi^\dagger \gamma^0$. The covariant derivative of a Dirac spinor is defined as $\nabla_\mu \Psi =\pd_\mu \Psi + \Gamma_\mu \Psi$, where the connection is given by
\begin{align}
\Gamma_\mu \equiv
	\frac18 \cor{\gamma^\alpha, \gamma^\beta}
	g_{\nu\rho} \, \bar{e}_\alpha^{\;\;\nu}
	\pr{\pd_\mu \bar{e}_\beta^{\;\;\rho}
		+ \Gamma^\rho_{\mu\sigma}
		\bar{e}_\beta^{\;\;\sigma}}
\;,
\end{align}
and $\Gamma^\rho_{\mu\sigma}$ is the Levi-Civita connection. The equation of motion for $\Psi$ is given by
\begin{align}
\label{Fermion-EOM2}
    i \gamma^0 \pr{\frac{1}{a} \, \Psi'
    	+ \frac{3}{2} \, \frac{a'}{a^2} \, \Psi}
    	+ i \, \frac{1}{a} \, \gamma^j \pd_j \Psi
    	- m \Psi = 0
    \;.
\end{align}
Most of the following derivation can already be found, e.g., in \cite{Cherkas:2006kx,Bilic:2010xd}.

\subsubsection*{Decomposition.}

The field operator $\hat{\Psi}$ can be decomposed into a complete set of modes such that
\begin{align}
\label{ModeDecomposition}
\hat{\Psi}(t,x)
&=	{a(t)}^{-3/2} \int \frac{\dd^3k}{\pr{2\pi}^{3/2}}
    \nonumber \\ &\hspace{0.5cm} \times
	\sum_{s = \pm} \pr{
		u^s_k(t) \, e^{-i k x} \, \hat{a}^s_k
		+ v^s_k(t) \, e^{i k x} \, \hat{b}^{s\dagger}_k
		}
\;,
\end{align}
where $s = \pm$ is the spin, $u^s_k$ and $v^s_k$ are the mode functions, $\hat{a}^s_k$ is the particle annihilation operator and $\hat{b}^{s\dagger}_k$ is the anti-particle creation operator. The mode functions satisfy the equations of motion
\subeq{\label{Fermion-EOM3}\begin{align}
\label{Fermion-EOM3u}
    i \gamma^0 {u_k^s}'
	    + \gamma^j k_j \, u_k^s
	    - m a \, u^s_k &= 0
\;, \\
\label{Fermion-EOM3v}
    i \gamma^0 {v_k^s}'
    	- \gamma^j k_j \, v_k^s
    	- m a \, v^s_k &= 0
\;.
\end{align}}
The conjugate momentum $\Pi$ is defined as $\Pi \equiv \delta S_\Psi / \delta \Psi' = i \, a^3 \, \Psi^\dagger$. We perform a canonical quantization of the field operators by imposing the equal-time anti-commutation relations $\{ \hat{\Psi}_a(t,x) , \hat{\Pi}_b(t,y) \} = i \, \delta^3(x-y) \, \delta_{ab}$, as well as $\{ \hat{a}_k^r , \hat{a}_l^{s\dagger} \} = \{ \hat{b}_k^r , \hat{b}_l^{s\dagger} \} = \delta^3(k-l) \, \delta^{rs}$. The requirement for consistency between these anti-commutation relations yields the Wronskian condition,
\begin{align}
\label{Fermion-Wronskian1}
    \sum_{s = \pm} \pr{
    	u^s_k \, u^{s\dagger}_k +
    	v^s_{-k} \, v^{s\dagger}_{-k}}
    =	\one
\;.
\end{align}
The particles and anti-particles are related to each other by charge conjugation. The charge conjugate spinor is defined as $\Psi^c \equiv C \Psi C = - i \pr{\bar{\Psi} \gamma^0 \gamma^2}^T$, together with $C \hat{a}_k^s C = \hat{b}_k^s$ and $C \hat{b}_k^s C = \hat{a}_k^s$. This implies that $u$ and $v$ are related as
\begin{align}
\label{Fermion-Charge-uv}
    u^s_k = - i \gamma^2 \, v^{s*}_k
    \qquad \text{and} \qquad
    v^s_k = - i \gamma^2 \, u^{s*}_k
\;.
\end{align}
Under these relations, one can show that each of \eqref{Fermion-EOM3u} and \eqref{Fermion-EOM3v} is satisfied if and only if the other one is satisfied.

Now, we make the ansatz~\cite{Greene:2000ew}
\begin{align}
    v^s_k = \mtrx{
        \pr{i \chi_k' + m a \, \chi_k} \varphi_s
        \\
        \sigma^j k_j \, \chi_k \, \varphi_s
        } \;,
\end{align}
where the 2-spinors
%$\varphi_+ = \mtrx{1 \\ 0}$ and $\varphi_- = \mtrx{0 \\ 1}$
$\varphi_+ = \pr{1, 0}^T$ and $\varphi_- = \pr{0, 1}^T$
are helicity eigenstates, and $\chi_k(t)$ is a $\C$-valued function that satisfies $\chi_{-k} = \chi_k$. With this ansatz, the equations of motion \eqref{Fermion-EOM3} and the Wronskian condition \eqref{Fermion-Wronskian1} are reduced to the two equations
\begin{align}
\label{Fermion-EOM5}
    \chi_k'' + \pr{k^2 + m^2 a^2 - i m a'} \chi_k
    &= 0
\end{align}
and
\begin{align}
\label{Fermion-Wronskian4}
    \vrr{i \chi_k' + m \, a \, \chi_k}^2
	+ k^2 \vrr{\chi_k}^2 &= 1
\;.
\end{align}
We proceed analogously to the case of a scalar field. We can write the complex mode function $\chi$ in polar form as $\chi_k(t) = R_k(t) \, \exp{- i \int^t \Omega_k(\bar{t}) \, \dd\bar{t}}$ with two real functions $R_k$ and $\Omega_k$, and with an arbitrary lower boundary on the integral that corresponds to the global phase ambiguity. The equation of motion \eqref{Fermion-EOM5} can be split into a real and an imaginary part,
\subeq{\begin{align}
\label{Fermion-EOM6r}
    R_k''
    + \pr{k^2 + m^2 a^2 - \Omega_k^2} R_k
    &= 0
    \;, \\
\label{Fermion-EOM6i}
    2 \, \Omega_k \, R_k'
    + \pr{m \, a' + \Omega_k'} R_k
    &= 0
    \;.
\end{align}}
The second equation \eqref{Fermion-EOM6i} is solved by
\begin{align}
    R_k(t) = \frac{c_1}{\sqrt{\Omega_k(t)}}
    \, \exp{- \half \int^t_{t_1}
        \frac{m a'(\bar{t})}{\Omega_k(\bar{t})} \, \dd\bar{t}}
\;,
\end{align}
where $c_1$ and $t_1$ are arbitrary constants. Then, \eqref{Fermion-EOM6r} and \eqref{Fermion-Wronskian4} become
\begin{align}
\label{Fermion-EOM7}
    \Omega_k^2 &= k^2 + m^2 a^2
    - \frac{m a'' + \Omega_k''}{2 \Omega_k}
    + \frac{\pr{m a' + \Omega_k'}
        \pr{m a' + 3 \Omega_k'}}{4 \Omega_k^2}
\end{align}
and
\begin{align}
\label{Fermion-Wronskian5}
    &
    \frac{c_1^2}{\Omega_k} \,
	\exp{- \int^t_{t_1}
        \frac{m a'(\bar{t})}{\Omega_k(\bar{t})} \, \dd\bar{t}}
    \nonumber \\ & \times
    \pr{k^2 + \pr{m a + \Omega_k}^2
        + \frac{\pr{m a' + \Omega_k'}^2}{4\Omega_k^2}}
    = 1
    \;.
\end{align}
Now, the time derivative of \eqref{Fermion-Wronskian5} is satisfied automatically by virtue of \eqref{Fermion-EOM7}. Hence, we can fix the constant $c_1$ at an arbitrary time, say $t=t_1$, as
\begin{align}
    c_1 = c_1(t_1) &\equiv
        \sqrt{\Omega_k(t_1)}
        \, \Bigg(
        k^2 + \pr{m a(t_1) + \Omega_k(t_1)}^2
        \nonumber \\ &\hspace{2.2cm}
        + \frac{\pr{m a'(t_1) + \Omega_k'(t_1)}^2}
        {4\Omega_k(t_1)^2}
        \Bigg)^{-1/2}
    .
\end{align}
Finally, we have shown that we can write the complex mode function as
\begin{align}
    \chi_k(t) = \frac{c_1(t_1)}{\sqrt{\Omega_k(t)}}
    \, \exp{- i \int^t \Omega_k(\bar{t}) \, \dd\bar{t}
        - \half \int^t_{t_1}
        \frac{m a'(\bar{t})}{\Omega_k(\bar{t})} \, \dd\bar{t}}
    \;,
\end{align}
where $\Omega_k$ is a real (positive) valued function which satisfies \eqref{Fermion-EOM7}. The arbitrariness of the parameter $t_1$ is supported by the observation that the value of $\chi_k$ is independent of $t_1$. 

\subsubsection*{Stress-energy tensor.}

The stress-energy tensor $T_{\mu\nu}$ for the Dirac spinor $\Psi$ is given by~\cite{Birrell:1982ix}
\begin{align}
    T_{\mu\nu}
    &=	\half \, i \, \bar{\Psi} \,
    	\tilde{\gamma}_{(\mu} \nabla_{\nu)} \Psi
    	- \half \, i \pr{\nabla_{(\mu} \bar{\Psi}}
    	\tilde{\gamma}_{\nu)} \, \Psi
\;.
\end{align}
We promote this to an operator $\hat{T}_{\mu\nu}$ with \emph{anti-symmetrized} operator ordering, such as $\Psi \Psi' \rightarrow \half [ \hat{\Psi} , \hat{\Psi}' ]$. After a long but straightforward calculation, we find
\subeq{\begin{align}
\label{Fermion-T00}
    \bra{0} \hat{T}_{00} \ket{0} &=
    \frac{2}{a^2} \int \frac{\dd^3k}{\pr{2\pi}^3} \pr{
        m a + \frac{2 k^2 \, \Omega_k}{K}}
    \;, \\
    \bra{0} \hat{T}_{0j} \ket{0} &= 0
    \;, \\
\label{Fermion-Tij}
    \bra{0} \hat{T}_{ij} \ket{0} &=
    \frac{4}{a^2} \int \frac{\dd^3k}{\pr{2\pi}^3} \,
        \frac{k_i k_j \pr{m a + \Omega_k}}{K}
    \;,
\end{align}}
where
\begin{align}
    K \equiv
    k^2 + \pr{m a + \Omega_k}^2
    + \frac{1}{4\Omega_k^2} \pr{m a' + \Omega_k'}^2
    \;.
\end{align}

\subsubsection*{Ground state.}

Now, we examine the ground state. The initial conditions $\Omega_k(t)$ and $\Omega_k'(t)$ which minimize the energy density in \eqref{Fermion-T00} at time $t$ are given by
\subeq{\begin{align}
\label{Fermion-GS-IC}
    \Omega_k(t) &= \sqrt{k^2 + m^2 \, a(t)^2}
    \;, \\
    \Omega_k'(t) &= - m a'(t)
    \;.
\end{align}}
From these, we get
\subeq{\label{Fermion-T-GS}\begin{align}
    {\langle \hat{T}_{00} \rangle
	}_{\mathrm{GS}(t)} &=
    \frac{1}{\pi^2 a^2}
    \int_0^\infty \dd k \; k^2 \, \omega_k
    \;, \\
    {\langle \hat{T}_{ij} \rangle
	}_{\mathrm{GS}(t)} &=
    \frac{\delta_{ij}}{\pi^2 a^2}
    \int_0^\infty \dd k \; \frac{k^4}{3 \, \omega_k}
    \;.
\end{align}}
These integrals diverge quartically, therefore we cannot use Pauli-Villars regu\-la\-ri\-zation yet. Nevertheless, the covariant conservation of the ground state expectation values can be shown using \eqref{Fermion-T-GS} and the formula \eqref{AdAll-CovD} without evaluating the integrals.

\subsubsection*{Adiabatic vacuum.}

For the adiabatic vacua, we try to solve \eqref{Fermion-EOM7} iteratively,
\subeq{\begin{align}
    \pr{W_k^{[0]}}^2 &\equiv \omega_k^2
    \;, \\
    \pr{W_k^{[\itero+1]}}^2 &\equiv \omega_k^2
    - \frac{m a'' + W_k^{[\itero]}{}''}{2 W_k^{[\itero]}}
    \nonumber \\ &\hspace{0.5cm}
    + \frac{\pr{m a' + W_k^{[\itero]}{}'}
        \pr{m a' + 3 W_k^{[\itero]}{}'}
        }{4 \big( W_k^{[\itero]} \big)^2}
    \;.
\end{align}}
The 2nd order adiabatic vacuum at time $t$ are defined by the initial conditions
\begin{align}
\label{Fermion-AdiabaticFreq}
    \Omega_k(t) = W_k^{[1]}(t)
    \;, \qquad
    \Omega_k'(t) = W_k^{[1]}{}'(t)
    \;.
\end{align}
%\begin{align}
%    \Omega_k^{[2]} &= \omega_k
%    + m \pr{m a + \omega_k} \pr{
%        \frac{5 m^2 a a'^2}{8 \omega_k^5}
%        - \frac{m a'^2}{8 \omega_k^4}
%        - \frac{a''}{4 \omega_k^3}
%        } + \mathcal{O}(\pd^4)
%    \;,
%\end{align}
Writing \eqref{Fermion-AdiabaticFreq} into \eqref{Fermion-T00} and \eqref{Fermion-Tij}, then expanding the integrands by the number of derivatives on the scale factor, we get
\subeq{\label{Fermion-T-AV}\begin{align}
    {\langle \hat{T}_{00} \rangle
	}_{\mathrm{AV}(t)} &=
    \frac{1}{\pi^2 a^2}
    \int_0^\infty \dd k \; k^2 \pr{
        \omega_k + m^2 \, \tau_0(k,m) }
    \;, \\
    {\langle \hat{T}_{ij} \rangle
	}_{\mathrm{AV}(t)} &=
    \frac{\delta_{ij}}{\pi^2 a^2}
    \int_0^\infty \dd k \; k^2 \pr{
        \frac{k^2}{3 \, \omega_k} + m^2 \, \tau_1(k,m) }
%    \;,
\end{align}}
up to $\mathcal{O}(\pd^4)$ terms, where we define
\subeq{\begin{align}
    \tau_0(k,m) &\equiv
    - \frac{k^2 \, a'^2}{8 \pr{k^2 + m^2 \, a^2}^{5/2}}
    \;, \\
    \tau_1(k,m) &\equiv
    - \frac{k^2}{24 \pr{k^2 + m^2 \, a^2}^{7/2}}
    \nonumber \\ &\hspace{0.5cm} \times
    \pr{
        \pr{k^2 + m^2 a^2} \pr{a'^2 - 2 a a''}
        + 5 m^2 a^2 a'^2}
    \;.
\end{align}}

\subsubsection*{Result.}

By subtracting \eqref{Fermion-T-AV} and \eqref{Fermion-T-GS}, we get
\subeq{\begin{align}
    {\langle \hat{T}_{00} \rangle}^{\mathrm{vac.}}
	&=	\frac{m^2}{\pi^2 a^2}
        \int_0^\infty \dd k \; k^2 \, \tau_0(k,m)
        + \mathcal{O}(\pd^4)
    \;, \\
    {\langle \hat{T}_{ij} \rangle}^{\mathrm{vac.}}
	&=	\frac{m^2}{\pi^2 a^2} \, \delta_{ij}
        \int_0^\infty \dd k \; k^2 \, \tau_1(k,m)
        + \mathcal{O}(\pd^4)
    \;.
\end{align}}
These integrals diverge only logarithmically, therefore we can use Pauli-Villars regu\-la\-ri\-zation on them. Since $T_{\mu\nu}$ has mass scale $[M^2]$, we factor out $m^2$ and consider the rest as a function of the mass. For an auxiliary mass scale $\mu$, we find
\subeq{\begin{align}
    &
    \frac{m^2}{\pi^2 a^2}
        \int_0^\infty \dd k \; k^2
        \pr{ \tau_0(k,m) - \tau_0(k,\mu) }
    \nonumber \\
    &=  \frac{m^2 \log\pr{m/\mu}}{24 \pi^2} \,
        \frac{3 a'^2}{a^2}
    \;,
\intertext{and}
    &
    \frac{m^2}{\pi^2 a^2} \, \delta_{ij}
        \int_0^\infty \dd k \; k^2
        \pr{ \tau_1(k,m) - \tau_1(k,\mu) }
    \nonumber \\
    &=  \frac{m^2 \log\pr{m/\mu}}{24 \pi^2} \,
        \frac{a'^2 - 2 a a''}{a^2}
    \;.
\end{align}}
Therefore, in agreement with the general discussion of Section~\ref{sec:SubtractionScheme}, we find that subtracted vacuum fluctuations do not renormalize the cosmological constant,
\begin{align}\splitt{
    {\langle \hat{T}_{\mu\nu} \rangle}^{\mathrm{vac.}} &=
    \mathcal{B}_\Delta(\mu) \, G_{\mu\nu} + \mathcal{O}(\pd^4)
    \;, \\
    \mathcal{B}_\Delta(\mu) &=
    \frac{1}{24 \pi^2} \, m^2 \log{\frac{m}{\mu}}
    \;.
}\end{align}

\subsection{Spin 1}

As our final application, we consider the Proca theory for a massive vector boson $A_\mu(t,x)$ in $n$ dimensions,
\begin{align}
\label{Proca-Action}
    S_A = \int \dd^n x \; \sqrt{\vrr{g}}
	\pr{- \frac14 \, F^{\mu\nu} F_{\mu\nu}
		+ \half \, m^2 A^\mu A_\mu}
	\;,
\end{align}
where $m$ is the mass and $F_{\mu\nu} = \nabla_\mu A_\nu - \nabla_\nu A_\mu$ is the field-strength tensor. The equation of motion for $A_\mu$ is the Proca equation
\begin{align}
\label{Proca-EOM1}
    \nabla^\mu F_{\mu\nu} + m^2 A_\nu = 0
    \;.
\end{align}
Taking the divergence of the Proca equation yields the Lorentz condition
\begin{align}
\label{Proca-EOM1L}
    \nabla^\mu A_\mu = 0
    \;.
\end{align}
The adiabatic regularization method was previously applied to a spin-1 field in the context of Stueckelberg theory in \cite{Frob:2013qsa}.

\subsubsection*{Quantization.}

We decompose the Proca field into its temporal and spatial com\-po\-nents, such that $A_\mu = ( \phi , \vec{A} )$. The Proca theory is a constrained system: There is no kinetic term for $\phi(t,x)$ in the action \eqref{Proca-Action}.

We denote the conjugate momenta to $\phi$ and $\vec{A}$ by $\Pi$ and $\vec{P}$, respectively. The system has two second-class constraints:
\begin{align}
    \Pi = 0
    \;, \qquad
    \vec{\pd} \cdot \vec{P}
    + m^2 \, a^{n-2} \, \phi = 0
    \;.
\end{align}
Applying the Dirac quantization procedure for a constrained system, we find the commutation relations
\subeq{
\label{Proca-Commutators}
\begin{align}
[ \hat{A}_i(t,x) , \hat{P}_j(t,y) ]
&=	i \, \delta_{ij} \, \delta^{n-1}(x-y)
\;, \\
[ \hat{\phi}(t,x) , \hat{A}_j(t,y) ]
&=	i \, m^{-2} \, a^{2-n} \,
	\frac{\pd}{\pd x^j} \, \delta^{n-1}(x-y)
\;,
\end{align}}
and all other commutators vanish.

\subsubsection*{Decomposition.}

We split the spatial vector $\vec{A}$ into its transverse and longi\-tu\-dinal parts as
\begin{align}
%\label{Proca-DecompGrad}
    A_j(t,x) = B_j(t,x) + \pd_j C(t,x)
    \quad \text{with} \quad
    \vec{\pd} \cdot \vec{B} = 0
    \;.
\end{align}
The equations of motion \eqref{Proca-EOM1} and the Lorentz condition \eqref{Proca-EOM1L} yield
\subeq{\begin{align}
    \vec{B}'' + (n-4) \, \frac{a'}{a} \, \vec{B}'
	- (\vec{\pd}^2) \, \vec{B} + m^2 a^2 \, \vec{B}
	&= 0
	\;, \\
	\phi' + (n-2) \, \frac{a'}{a} \, \phi
	- \vec{\pd}^2 C &= 0
	\;, \\
	\vec{\pd}^2 C' - \vec{\pd}^2 \phi
	+ m^2 a^2 \phi &= 0
	\;.
\end{align}}
The Proca field $A_\mu$, obeying the Lorentz condition \eqref{Proca-EOM1L}, has $n-1$ independent polarizations: $n-2$ of them are transversal (in $\vec{B}$) and $1$ is longitudinal (in $\phi$ and $C$). Hence, we expand the functions $\phi, \vec{B}, C$ in their Fourier modes as
\subeq{\begin{align}
	B_j(t,x) &=
	\int \frac{\dd^{n-1}k}{\pr{2\pi}^{(n-1)/2}}
	\sum_{r = 1}^{n - 2} \, \varepsilon_j(k,r)
	\nonumber \\ &\hspace{0.5cm} \times
	\pr{\chi_{k,r}^{\phantom{*}}(t) \, e^{-ikx} \, \hat{a}_k^r
		+ \chi_{k,r}^*(t) \, e^{ikx} \, \hat{a}_k^{r\dagger} }
	\;, \\
    \phi(t,x) &=
    \int \frac{\dd^{n-1}k}{\pr{2\pi}^{(n-1)/2}}
	\pr{u_k^{\phantom{*}}(t) \, e^{-ikx} \, \hat{a}_k^0
		+ u_k^*(t) \, e^{ikx} \, \hat{a}_k^{0\dagger} }
	\;, \\
	C(t,x) &=
	\int \frac{\dd^{n-1}k}{\pr{2\pi}^{(n-1)/2}}
	\pr{v_k^{\phantom{*}}(t) \, e^{-ikx} \, \hat{a}_k^0
		+ v_k^*(t) \, e^{ikx} \, \hat{a}_k^{0\dagger} }
	\;.
\end{align}}
Here, the label $r = 1,...,n-2$ stands for the transversal polarizations of the vector boson. The polarization vectors $\vec{\varepsilon}(k,r)$ of the transverse modes satisfy the transversality condition $\sum_{j=1}^{n-1} k_j \, \varepsilon_j(k,r) = 0$, and the normalization conditions $\sum_{j=1}^{n-1} \varepsilon_j(k,r) \, \varepsilon_j(k,r') = \delta_{rr'}$ and $\sum_{r=1}^{n-2} \varepsilon_i(k,r) \, \varepsilon_j(k,r) = \delta_{ij} - \frac{k_i k_j}{k^2}$. The mode functions $\chi_{k,r}, u_k, v_k$ satisfy the equations of motion
\subeq{\begin{align}
    &\chi_{k,r}'' + \pr{n-4} \frac{a'}{a} \, \chi_{k,r}'
    + \omega_k^2 \, \chi_{k,r}^{\phantom{*}} = 0 
    \;, \\
    &u_k'' + \pr{n-2} \frac{a'}{a}  u_k'
    + \pr{\omega_k^2 - (n-2)  \frac{a'^2 - a a''}{a^2}}
        u_k^{\phantom{*}}
    = 0
    ,
\end{align}}
and the constraint
\begin{align}
    v_k^{\phantom{*}} &= - \frac{1}{k^2}
    \pr{ u_k' + (n-2) \, \frac{a'}{a} \, u_k^{\phantom{*}} }
    \;.
\end{align}
We impose the canonical commutation relations $[ \hat{a}_k^0 , \hat{a}_{k'}^{0\dagger} ] = \delta^{n-1}(k-k')$ and $[ \hat{a}_k^r , \hat{a}_{k'}^{r'\dagger} ] = \delta_{rr'} \, \delta^{n-1}(k-k')$ on the annihilation and creation operators. The requirement for consistency between these relations and the commutators \eqref{Proca-Commutators} yields the Wronskian conditions
\begin{align}
    \chi_{k,r}^{\phantom{*}} \, \chi_{k,r}'^*
    - \chi_{k,r}^* \, \chi_{k,r}'
    = i \, a^{4-n}
\end{align} 
and
\begin{align}    
    u_k^{\phantom{*}} \, u_k'^* - u_k^* \, u_k'
    =	i k^2 \, m^{-2} \, a^{2-n} \, ,
\end{align}
Similarly as in the previous examples, we can eliminate the Wronskian conditions by writing the complex mode functions in the polar form,
\subeq{\begin{align}
    \chi_{k,r}(t) &=
    \frac{a(t)^{(4-n)/2}}{\sqrt{2 \Omega_{k,r}(t)}} \,
    \exp\cor{-i \int^t \Omega_{k,r}(\bar{t}) \, \dd\bar{t}}
    \;, \\
    u_k(t) &=
    \sqrt{\frac{k^2}{m^2 a(t)^2}} \;
    \frac{a(t)^{(4-n)/2}}{\sqrt{2 \Theta_k(t)}} \,
    \exp\cor{-i \int^t \Theta_k(\bar{t}) \, \dd\bar{t}}
    \;.
\end{align}}
The positive-valued functions $\Omega_{k,r}$ and $\Theta_k$ satisfy the vacuum equations
\subeq{
\label{Proca-EOM4}
\begin{align}
    \Omega_{k,r}^2 &= \omega_k^2
	- (n-4) \, \frac{(n-6) \, a'^2 + 2 a a''}{4a^2}
	+ \frac34 \, \frac{\Omega_{k,r}'^2}{\Omega_{k,r}^2}
	- \half \, \frac{\Omega_{k,r}''}{\Omega_{k,r}^{\phantom{*}}}
	\;, \\
	\Theta_k^2 &= \omega_k^2
	- (n-2) \, \frac{n \;\! a'^2 - 2 a a''}{4a^2}
	+ \frac34 \, \frac{\Theta_k'^2}{\Theta_k^2}
	- \half \, \frac{\Theta_k''}{\Theta_k}
	\;.
\end{align}}

\subsubsection*{Stress-energy tensor.}

The stress-energy tensor $T_{\mu\nu}$ for the Proca field $A_\mu$ is given by
\begin{align}
    T_{\mu\nu} &=
	\frac14 \, g_{\mu\nu} F^{\rho\sigma} F_{\rho\sigma}
	- g^{\rho\sigma} F_{\mu\rho} F_{\nu\sigma}
	\nonumber \\ &\hspace{0.5cm}
	+ m^2 A_\mu A_\nu
	- \half \, m^2 g_{\mu\nu} A^\rho A_\rho
    \;.
\end{align}
Again, we promote this to an operator $\hat{T}_{\mu\nu}$ with \emph{symmetrized} ordering. In a straightforward calculation, we find the vacuum expectation value of this operator to be
\subeq{\begin{align}
\label{Proca-T00}
    \bra{0} \hat{T}_{00} \ket{0} &=
    \frac{1}{4 a^{n-2}}
    \int \frac{\dd^{n-1}k}{\pr{2\pi}^{n-1}} \,
\nonumber \\ &\hspace{0.5cm} \times
    \Bigg\{
    \frac{1}{\Theta_k} \pr{
        \omega_k^2 + \Upsilon_\Theta }
    + \sum_{r=1}^{n-2}
    \frac{1}{\Omega_{k,r}} \pr{\omega_k^2 + \Upsilon_\Omega }
    \Bigg\} \;,
    \displaybreak[0] \\
    \bra{0} \hat{T}_{0j} \ket{0} &= 0 \;,
    \displaybreak[0] \\
    \bra{0} \hat{T}_{ij} \ket{0} &=
    \frac{1}{4 a^{n-2}}
    \int \frac{\dd^{n-1}k}{\pr{2\pi}^{n-1}} \, \Bigg\{
    2 k_i k_j \pr{\frac{1}{\Theta_k}
    + \sum_{r=1}^{n-2} \frac{1}{\Omega_{k,r}} }
    \nonumber \\ & \hspace{0.5cm}
    + \sum_{r=1}^{n-2}
    \frac{1}{\Omega_{k,r}}
    \pr{ 2 \varepsilon_{i,k,r} \, \varepsilon_{j,k,r} - \delta_{ij}}
    \pr{\omega_k^2 - \Upsilon_\Omega }
    \nonumber \\ & \hspace{0.5cm}
    + \frac{1}{\Theta_k}
    \pr{\delta_{ij} - \frac{2 k_i k_j}{k^2}}
    \pr{\omega_k^2 - \Upsilon_\Theta }
    \Bigg\}
\;,
\end{align}}
where
\subeq{\begin{align}
    \Upsilon_\Omega &\equiv
    \Omega_{k,r}^2
    + \frac14 \pr{\frac{\Omega_{k,r}'}{\Omega_{k,r}}
        + \pr{n-4} \frac{a'}{a}}^2
    \;, \\
    \Upsilon_\Theta &\equiv
    \Theta_k^2
    + \frac14 \pr{\frac{\Theta_k'}{\Theta_k}
        + \pr{2-n} \frac{a'}{a}}^2
    \;.
\end{align}}

\subsubsection*{Ground state.}

At any given time $t$, the instantaneous ground state which minimizes the energy density \eqref{Proca-T00} is given by the initial conditions
\subeq{\label{Proca-GS-IC}\begin{align}
    \Omega_{k,r}(t) &= \omega_k(t)
    \;, \quad
    \Omega'_{k,r}(t) =
    \pr{4-n} \frac{a'(t)}{a(t)} \, \omega_k(t)
    \;, \\
    \Theta_k(t) &= \omega_k(t)
    \;, \quad \hspace{6pt}
    \Theta'_k(t) =
    \pr{n-2} \frac{a'(t)}{a(t)} \, \omega_k(t)
    \;.
\end{align}}
Substituting these into the vacuum expectation value of the stress-energy tensor, we get
\begin{align}
\label{Proca-T-GS}
    {\langle \hat{T}_{\mu\nu} \rangle
	}_{\mathrm{GS}(t)} &=
    \pr{\frac{m^2}{4\pi}}^{n/2} \,
    \Gamma\!\cor{1 - \frac{n}{2}}
    \frac{n-1}{n} \, g_{\mu\nu}
    \;.
\end{align}
Since this quantity is proportional to the metric, its covariant conservation is manifest.

\subsubsection*{Adiabatic vacuum.}

As usual, we try to solve \eqref{Proca-EOM4} iteratively,
\subeq{\begin{align}
    \pr{W_k^{[0]}}^2 &\equiv \omega_k^2
    \;, \\
    \pr{W_k^{[\itero+1]}}^2 &\equiv \omega_k^2
	- (n-4) \, \frac{(n-6) \, a'^2 + 2 a a''}{4a^2}
    \nonumber \\ &\hspace{0.5cm}
    + \frac{3}{4} \pr{\frac{W_k^{[\itero]}{}'}{W_k^{[\itero]}}}^2
    - \half \frac{W_k^{[\itero]}{}''}{W_k^{[\itero]}}
    \;, \\
    \pr{Z_k^{[0]}}^2 &\equiv \omega_k^2
    \;, \\
    \pr{Z_k^{[\itero+1]}}^2 &\equiv \omega_k^2
	- (n-2) \, \frac{n \;\! a'^2 - 2 a a''}{4a^2}
    \nonumber \\ &\hspace{0.5cm}
    + \frac{3}{4} \pr{\frac{Z_k^{[\itero]}{}'}{Z_k^{[\itero]}}}^2
    - \half \frac{Z_k^{[\itero]}{}''}{Z_k^{[\itero]}}
    \;.
\end{align}}
The 2nd order adiabatic vacuum at time $t$ are defined by the initial conditions
\subeq{\label{Proca-AdiabaticFreq}\begin{align}
    \Omega_{k,r}(t) &= W_k^{[1]}(t)
    \;, \qquad
    \Omega'_{k,r}(t) = W_k^{[1]}{}'(t)
    \;, \\[5pt]
    \Theta_k(t) &= Z_k^{[1]}(t)
    \;, \qquad \hspace{10pt}
    \Theta'_k(t) = Z_k^{[1]}{}'(t)
    \;.
\end{align}}
%The adiabatic vacuum solution to \eqref{Proca-EOM4} is given by
%\begin{align}
%    \Omega_{k,r} &= \omega_k
%    - \pr{n-4} \frac{\pr{n-6} a'^2 + 2 a a''}{8 a^2 \omega_k}
%    - \frac{m^2 \pr{a'^2 + a a''}}{4 \omega_k^3}
%    + \frac{5 m^4 a^2 a'^2}{8 \omega_k^5}
%    + \mathcal{O}(\epsilon^4)
%    \;, \\
%    \Theta_k &= \omega_k
%    - \pr{n-2} \frac{n a'^2 - 2 a a''}{8 a^2 \omega_k}
%    - \frac{m^2 \pr{a'^2 + a a''}}{4 \omega_k^3}
%    + \frac{5 m^4 a^2 a'^2}{8 \omega_k^5}
%    + \mathcal{O}(\epsilon^4)
%    \;.
%\end{align}
The expectation value of the stress-energy tensor is then, to second adiabatic order
\begin{align}
\label{Proca-T-AV}
    {\langle \hat{T}_{\mu\nu} \rangle
	}_{\mathrm{AV}(t)} &=
    \pr{\frac{m^2}{4\pi}}^{n/2} \,
    \Gamma\!\cor{1 - \frac{n}{2}}
    \nonumber \\ &\hspace{0.5cm} \times
    \pr{
        \frac{n-1}{n} \, g_{\mu\nu}
        + \frac{7-n}{6} \, m^{-2} \, G_{\mu\nu} }
    \,
\end{align}
up to $\mathcal{O}(\pd^4)$ terms.
To the best of our knowledge, this is the first derivation of the adiabatic vacuum and adiabatic stress-energy tensor expectation value for a Proca field. 

\subsubsection*{Result.}

Subtracting \eqref{Proca-T-AV} and \eqref{Proca-T-GS}, we finally arrive at
\begin{align}\splitt{
    {\langle \hat{T}_{\mu\nu} \rangle}^{\mathrm{vac.}} &=
    \mathcal{B}_\Delta \, G_{\mu\nu}
    + \mathcal{O}(\pd^4)
    \;, \\
    \mathcal{B}_\Delta &=
    \pr{\frac{m^2}{4\pi}}^{n/2}
    \Gamma\!\cor{1 - \frac{n}{2}} \,
    \frac{7-n}{6} m^{-2}
\;,
}\end{align}
once again confirming the results of Section~\ref{sec:SubtractionScheme}. For dimensional renormalization, we can separate the divergent and finite parts of $\mu^{4-n} \, \mathcal{B}_\Delta$ using a Laurent expansion around $n=4$,
\begin{align}
    \mu^{4-n} \, \mathcal{B}_\Delta &=
    \frac{m^2}{16 \pi^2 \pr{n-4}}
    + \frac{m^2}{32 \pi^2} \pr{\gamma + \log{\frac{m^2}{4\pi \mu^2}}}
    \nonumber \\ &\hspace{0.5cm}
    - \frac{5 m^2}{96 \pi^2}
    + \mathcal{O}(n-4)
    \;.
\end{align}

\section{Conclusion}
We here considered the possibility that the source of gravity in the semiclassical Einstein equation is the difference in stress-energy expectation value between the state of the Universe and the instantaneous ground state. We proved that for homogeneous and isotropic cosmological spacetimes, for which one can of course identify the instantaneous ground state, the proposed stress-energy tensor satisfies the Bianchi identity.

We discussed renormalization and the r\^ole of the physical (no particle, adiabatic) vacuum in our scheme. We find that, as a consequence of the instantaneous ground state subtraction, the vacuum energy density $\rho_\Lambda = \frac{1}{8\pi G} \, \Lambda$ becomes radiatively stable, i.e., protected from renormalization at the UV scale. We demonstrate this explicitly in the case of a scalar field with arbitrary coupling to the scalar curvature, for a free spinor and a free Proca field.

Our assumptions in this study were 1) that the background metric is of the FLRW kind, 2) the existence of instantaneous ground states, and 3) the existence of adiabatic vacua. The validity of the last two assumptions is nontrivial for generic interacting theories. This invites further investigation of the stabilization of the cosmological constant along the lines outlined here.

Finally, it will be very interesting to explore to what extent our results can be developed beyond highly symmetrical cosmological backgrounds. It will also be intriguing to understand if and how existing quantum gravity models can accommodate a first principles derivation for our proposal.

$$$$

\section*{Acknowledgements}

We are thankful to Niayesh Afshordi, Lee Smolin and Neil Turok for their comments on the draft. %This research was supported in part by Perimeter Institute for Theoretical Physics.
Research at Perimeter Institute is supported in part by the Government of Canada through the Department of Innovation, Science and Economic Development Canada and by the Province of Ontario through the Ministry of Economic Development, Job Creation and Trade. AK acknowledges support through the Discovery Grants Program of the National Science and Engineering Research Council of Canada (NSERC). This research was also partly supported by grants from John Templeton Foundation and FQXi.

%\printbibliography
\bibliography{CosmologicalConstant.bib}

\end{document}